\documentclass[10pt,twocolumn,twoside]{IEEEtran}\IEEEoverridecommandlockouts 			


\usepackage{hyperref}
\usepackage{graphics} 
\usepackage{epsfig} 
\usepackage{times} 
\usepackage[tbtags]{amsmath} 
\usepackage{amssymb}  
\usepackage{amsthm}
\usepackage{mathtools}
\usepackage{cite}
\usepackage{hyperref}
\usepackage{algorithm}
\usepackage{algpseudocode}
\usepackage{xcolor}
\usepackage{tikz}
\usepackage{graphicx}
\newtheorem{theorem}{\textbf{Theorem}}
\newtheorem{proposition}{\textbf{Proposition}}
\newtheorem{lemma}{\textbf{Lemma}}

\newtheorem{definition}{Definition}

\newtheorem{remark}{\textbf{Remark}}
\newtheorem{assumption}{\textbf{Assumption}}

\usepackage{subcaption}
\usetikzlibrary{shapes}
\usepackage{float}
\usetikzlibrary{calc}
\usepackage{enumerate}

\pdfminorversion=4
\newcommand{\trf}{\mathcal{G}}
\newcommand{\nbr}{\mathcal{N}}

\newcommand{\real}{\mathbb{R}}
\newcommand{\mean}{\mathbb{E}}
\title{Topology Learning in Radial Dynamical Systems with Unreliable Data}
\author{Venkat Ram Subramanian, Deepjyoti Deka, Saurav Talukdar, Andy Lamperski, Murti Salapaka
\thanks{
V.R.S., S.T., A.L., M.S. are with the Department of Electrical and Computer
   Engineering, University of Minnesota, Minneapolis, MN 55455, USA. {\tt\small subra148@umn.edu, sauravtalukdar@umn.edu, alampers@umn.edu,  murtis@umn.edu} D.D. is with the Theoretical Division at Los Alamos National Laboratory, New Mexico, USA. {\tt\small deepjyoti@lanl.gov}}
   \thanks{Work supported in part by NSF CMMI 1727096. }
}
\begin{document}
\maketitle
\begin{abstract}
Many complex engineering systems admit 
bidirectional and linear couplings between their agents. Blind and passive methods to identify such influence
pathways/couplings from data are central to many applications. However,
dynamically related data-streams originating at different sources are
prone to corruption caused by asynchronous time-stamps of different
streams, packet drops and noise. Such imperfect information may be present in the entire observation period, and hence not detected by change-detection algorithms that require an initial clean observation period. Prior work has shown that spurious links are inferred in the graph structure due to the corrupted data-streams, which prevents consistent learning. In this article, we provide a novel approach to
detect the location of corrupt agents as well as present an algorithm to learn the structure of radial dynamical systems despite corrupted data streams. In particular, we show that our approach provably learns the true radial structure if the unknown corrupted nodes are at least three \emph{hops} away from each other. Our theoretical results are further validated in test dynamical network. 
\end{abstract}
\section{Introduction}
Networks provide an effective representation framework to analyze the interactions in complex systems. Network representations are widely used in a various fields like neuroscience \cite{bassett2017network}, social networks \cite{scott1988social}, power grid \cite{deka2018structure} to name a few. Learning the network representation can provide insights into the analysis of system behavior, help identify critical links, detect faults, and optimize flows. Initial research on learning the network representation involved considering the outputs of the system as  random variables \cite{jordan1998learning} . However, such an approach requires independence among time-lagged observations and is not applicable in the presence of dynamics in the system and with the availability of high (time) resolution observations from the system. 

Two distinct paradigms have emerged, active network inference \cite{he2008active} and passive network inference \cite{buntine1996guide}. In the former, the value of an output is set to a fixed value and the impact on other outputs are analyzed to infer the network structure. Whereas, only the time series observations from the system are utilized to infer the network structure without any active intervention in the latter. In this article, we will focus on a passive approach to inference of the network structure. In fact, in many applications like the stock market, and power grid,  active intervention is much costlier and often impermissible compared to passive data collection.

In this article, we consider dynamical systems where every coupling between agents/nodes is considered to be a bi-directional. There are several physical systems, especially flow driven systems like power grid networks \cite{talukdar2017learning}, heat transfer networks \cite{pass2018thermodynamic}, fluid flow networks, and others like networks of oscillators
\cite{nunez2015synchronization} and consensus networks
\cite{patel2017distributed},  where a notion of a directed edge is insufficient to capture the system dynamics, and requires bi-directed edges to correctly represent the influence between two nodes.

Inferring the network representation from time series measurements for complex dynamical system has recently gained interest in controls community \cite{goncalves2008necessary}, \cite{yuan2011robust}, \cite{gevers2017on}, \cite{weerts2018identifiability}, \cite{van2019necessary}.  
In the context of Linear Time Invariant (LTI) systems,  \cite{MatSal12} use multivariate Wiener filters to infer the network structure in a passive manner in the context of LTI systems. Here, the authors infer the moral graph of the system, which has spurious edges between graph nodes that are two-hops away. 
In \cite{talukdar2017Exact}, the authors show that for bi-directed LTI system networks with a radial network topology, the spurious edges in the moral graph can be eliminated using graphical separation rules. Similarly, phase -based results of the estimated Wiener filters have been shown to enable removal of spurious edges in non-radial bi-directional LTI systems \cite{talukdar2020physics}. The problem of learning polytree structures has been studied in \cite{etesami2016learning} and \cite{sepehr2019blind}. 
The authors provide guarantees of a consistent reconstruction. All the above work assume that the measurements are uniformly sampled and are available without any non ideal aspect like packet drops or random delays. Often, the data-streams in large systems are not immune to effects of noise \cite{stankovic2018distributed}, asynchronous sensor clocks \cite{cho2014survey}, \cite{cavraro2019dynamic} and packet drops \cite{leong17sensor},\cite{zhou2018distributed}. In \cite{SLS17network} focusing on directed networks with linear time-invariant (LTI) interactions, authors provided characterization of the extent of spurious
links that can appear due to data-corruption in the moral graph. However, little is known if these spurious edges can be eliminated to infer the exact network structure even in the presence of corruptions in the data streams, thus establishing consistency guarantees. In \cite{SLS2019corruption}, focusing on bi-directional networks, it is shown that the location of corrupt nodes can be detected by combining tools from information theory and graph theory. However, a method to eliminate spurious edges was not presented. 

 \textit{Our contribution:} In this article, the objective is to determine the exact network representation of \emph{radial} bi-directional LTI systems, using passive means from corrupt data-streams. We show that for radial bi-directed network of LTI systems where corrupt nodes are located deep in the network, at least three hops away from the leaf nodes, the spurious edges owning to data corruption can be eliminated and the the exact network structure can be inferred. We present novel topological characterizations and phase-based properties to determine the exact location of corruptions. Finally, we propose an algorithm called `hide and learn' to determine the exact topology generating the time series observations. To this, we follow a similar topology learning algorithm, presented in our prior conference paper \cite{talukdar2018topology} that considered hidden nodes. However in \cite{talukdar2018topology}, there was a tighter assumption on the distance between hidden nodes restricting them to be at least four hops away from each other and the measurements were assumed to be perfect. Moreover, rigorous proofs were not presented. Here, we consider time-series with imperfect information and relax the assumption on the location of corrupt nodes, and provide rigorous proofs to our results.
 
The preliminary section ~\ref{sec:prelim} describes the generative model, the graphical representation for the network and reviews earlier work on LTI network identification using power spectra. Section ~\ref{sec:uncertain} describes the data corruption models and its effect on structure inference. In Section ~\ref{sec:corruptionDetection}, we present main result to determine the location of corrupt nodes. The exact topology learning algorithm is presented in Section ~\ref{sec:exactLearn}. Simulation results are provided in Section ~\ref{sec:sim} and finally, a conclusion is provided in Section ~\ref{sec:conclude}. 
\subsection*{Notation}
\noindent $Y$ denotes a vector with $y_i$ being $i^{th}$ element of $Y.$ \\
$i - j$ denotes an undirected edge between nodes $i$, $j$ in an undirected graph while $i\to j$ denotes a directed edge from $i$ to $j$ in a directed graph.\\
If $M(z)$ is a transfer function matrix, then $M(z)^* = M(z^{-1})^T$ is the conjugate transpose.$M(i,j)$ denotes the matrix entry at $i^{th}$ row and $j^{th}$ column.\\
$\mathbb{E}[\cdot ]$ denotes expectation operator.\\
$R_{XY}(k):=\mathbb{E}[X[n+k]Y[n]]$ is the cross-correlation function of jointly wide-sense stationary(WSS) processes $X$ and $Y$. If $Y=X$ then $R_{XX}(k)$ is called the auto-correlation.\\
$\Phi _{XY}(z):=\mathcal{Z}(R_{XY}(k))$ represents the cross-power spectral density while $\Phi _{XX}(z):=\mathcal{Z}(R_{XX}(k))$ denotes the power spectral density(PSD).
$\mathcal{Z}(\cdot )$ is the Z-transform operator.\\
$b_i$ represents the $i^{th}$ element of the canonical basis of
$\real ^n$.
\section{Preliminaries}\label{sec:prelim} In this section the generative model and the generative graph that represents the networked system are presented. 
\subsection{Generative Model}\label{sec:GenModel}
Consider $N$ agents that interact over a
network. Consider the following continuous time linear dynamics for each agent $i \in \{1,\cdots N\}$:
\begin{equation}\label{eq:ctDyn}
    \sum _{m=1}^n a_{m,i}\frac{d^mx_i}{dt^m}  = \sum_{j=1,j\ne i}^Nb_{ij}(x_j(t)-x_i(t)) + w_i(t),
\end{equation} where the process $w_i(t)$ is considered to be zero mean WSS process innate to agent $i$ and thus $w_i$ is independent of $w_j$ if $i\not=
j.$  Thus, the power spectral density (PSD) of $w=(w_1,w_2,\dots , w_N)^T,$ $\Phi_w(z)$ is a diagonal matrix. Above, $a_{m,i}$, $b_{ij}\in \real $. We assume the signals are bounded in a mean-square
sense: $\mean [\parallel  x_i[t]\parallel  ^2] <\infty $ and $\mean
[\parallel w_i[t] \parallel ^2] < \infty$.
After discretization and taking $z$ transform we obtain the following:
\begin{equation}\label{eq:Dyn}
S_i(z)x_i(z) = \sum_{j=1,j\ne i}^Nb_{ij}x_j(z) + w_i(z) \ \ \mbox{for
}i=1,\ldots, N. 
\end{equation}
Here $s_i(z) $ is a transfer function obtained due to the derivatives of $x_i(t)$.  Rewriting the above equation we obtain:
\begin{equation}\label{eq:dim}
x_i(z) = \sum_{j=1,j\ne i}^N\trf_{ij}(z)x_j(z) + e_i(z) \ \ \mbox{for
}i=1,\ldots, N. 
\end{equation} 
Here, $\trf_{ij}(z)=\frac{b_{ij}}{S_i(z)}$, $e_i(z)=\frac{w_i(z)}{S_i(z)}$.

Compactly,  \eqref{eq:dim} is equivalent to 
\begin{equation}
\label{eq:dimCompact}
x = \trf (z) x + e,
\end{equation}
where $x=(x_1(z),x_2(z),\dots,x_N(z))^T$ and $e=(e_1(z),e_2(z),\dots,e_N(z))^T$ and $\trf(i,j)=\trf_{ij}(z)$.
We call the pair $(\trf(z),e)$  \textit{generative model}. We consider generative models such that $\trf_{ij}(z)\ne 0$ and $\trf_{ji}(z)\ne 0$. Such models are prevalent in linearized models of engineering systems operating around an equilibrium point. For example, consider swing dynamics for power systems and heat transfer dynamical systems.
%
\begin{figure}
\centering
\subcaptionbox{
  \label{fig:gengraph} 
    Generative Graph $G$} 
 {\begin{subfigure}{0.49\columnwidth}
\centering
 \begin{tikzpicture}[scale=0.65]
 \tikzstyle{vertex}=[circle,fill=none,minimum size=10pt,inner sep=0pt,thick,draw]
 \tikzstyle{pvertex}=[star,star points=14,fill=white,minimum size=10pt,inner sep=0pt,thick,draw]
  \node[vertex] (n1) {$1$};

  \node[vertex] (n2) at ($(n1) + (3em,0em)$) {$2$};%

        \node[vertex] (n3) at ($(n2) + (0em,3em)$) {$3$};%
        \node[vertex] (n4) at ($(n2) + (3em,0em)$) {$4$};
        \node[vertex] (n5) at ($(n4) + (0em,3em)$) {$5$};
        \node[vertex] (n6) at ($(n4) + (0em,-3em)$) {$6$};
        \node[vertex] (n7) at ($(n4) + (3em,0em)$) {$7$};
        \node[vertex] (n8) at ($(n7) + (0em,3em)$) {$8$};
        \node[vertex] (n9) at ($(n7) + (0em,-3em)$) {$9$};
        \node[vertex] (n10) at ($(n7) + (3em,0em)$) {$10$};
            \node[vertex] (n11) at ($(n10) + (0em,3em)$) {$11$};
        \node[vertex] (n12) at ($(n10) + (0em,-3em)$) {$12$};
        
	\draw[->,thick] (n1)--(n2);  
  \draw[->,thick] (n2)--(n3);  \draw[->,thick] (n2)--(n4);
  \draw[->,thick] (n4)--(n5);  \draw[->,thick] (n4)--(n6);
   \draw[->,thick] (n4)--(n7);
  \draw[->,thick] (n7)--(n8);  \draw[->,thick] (n7)--(n9);
 \draw[->,thick] (n7)--(n10);  \draw[->,thick] (n5)--(n4);
 \draw[->,thick] (n8)--(n11);  \draw[->,thick] (n9)--(n12);
	\draw[->,thick] (n2)--(n1);  
  \draw[->,thick] (n3)--(n2);  \draw[->,thick] (n4)--(n2);
  \draw[->,thick] (n5)--(n4);  \draw[->,thick] (n6)--(n4);
   \draw[->,thick] (n7)--(n4);
  \draw[->,thick] (n8)--(n7);  \draw[->,thick] (n9)--(n7);
 \draw[->,thick] (n10)--(n7);  \draw[->,thick] (n4)--(n5);
 \draw[->,thick] (n11)--(n8);  \draw[->,thick] (n12)--(n9);
\end{tikzpicture}
     \end{subfigure}}
 \subcaptionbox{
  \label{fig:gentop} 
    Generative Topology $G^T$} 
 {\begin{subfigure}{0.49\columnwidth}
\centering
 \begin{tikzpicture}[scale=0.65]
 \tikzstyle{vertex}=[circle,fill=none,minimum size=10pt,inner sep=0pt,thick,draw]
 \tikzstyle{pvertex}=[star,star points=14,fill=white,minimum size=10pt,inner sep=0pt,thick,draw]
  \node[vertex] (n1) {$1$};

  \node[vertex] (n2) at ($(n1) + (3em,0em)$) {$2$};%

        \node[vertex] (n3) at ($(n2) + (0em,3em)$) {$3$};%
        \node[vertex] (n4) at ($(n2) + (3em,0em)$) {$4$};
        \node[vertex] (n5) at ($(n4) + (0em,3em)$) {$5$};
        \node[vertex] (n6) at ($(n4) + (0em,-3em)$) {$6$};
        \node[vertex] (n7) at ($(n4) + (3em,0em)$) {$7$};
        \node[vertex] (n8) at ($(n7) + (0em,3em)$) {$8$};
        \node[vertex] (n9) at ($(n7) + (0em,-3em)$) {$9$};
        \node[vertex] (n10) at ($(n7) + (3em,0em)$) {$10$};
            \node[vertex] (n11) at ($(n10) + (0em,3em)$) {$11$};
        \node[vertex] (n12) at ($(n10) + (0em,-3em)$) {$12$};
 \draw[-,thick] (n1)--(n2);  
  \draw[-,thick] (n2)--(n3);  \draw[-,thick] (n2)--(n4);
  \draw[-,thick] (n4)--(n5);  \draw[-,thick] (n4)--(n6);
   \draw[-,thick] (n4)--(n7);
  \draw[-,thick] (n7)--(n8);  \draw[-,thick] (n7)--(n9);
 \draw[-,thick] (n7)--(n10);  \draw[-,thick] (n5)--(n4);
 \draw[-,thick] (n8)--(n11);  \draw[-,thick] (n9)--(n12);
 
\end{tikzpicture}
     \end{subfigure}}
 
    \caption{\label{fig:tree}A generative graph and its tree topology. }
    \end{figure}
\subsection{Graphical Representation}\label{sec:graph}
The structural description of  \eqref{eq:dim} induces a {\it generative graph} $G=(V,\overrightarrow{A})$ formed by identifying the set of vertices, $V=\{1,2,\dots ,n\},$ with random processes $x_i$ and the set of directed links, $\overrightarrow{A},$ is given by: $\overrightarrow{A}=\{i\to j | \trf _{ji}\ne 0\}$. Since we consider bi-directional dynamical systems, it follows that we have that a directed edge from $j$ to $i$ as well. Thus, $G$ is a bi-directed graph. Given generative graph $G$, its \emph{ generative topology} is the undirected graph  $G^T=(V,A)$ where $A=\{i-j \mid i\to j \in \overrightarrow{A}\} \cup \{i-j \mid i\gets j \in \overrightarrow{A} \}$. The following definitions on undirected graphs will be useful for subsequent analysis. Figure ~\ref{fig:tree} represents a bidirected system.
\begin{definition}[Path]Nodes $w_1,w_2, \dots , w_n\in V$ forms a path in an undirected graph, $G=(V,A)$, if for every $i=1,2,\dots, n-1$ we have $w_i-w_{i+1}$ in $A$. The path is denoted by $w_1-w_2\dots -w_n$. The length of the path is one less than the number of nodes in the path.   
\end{definition}
\begin{definition}[$n$-Hop Neighbor]
Given an undirected graph, $G=(V,A)$, a node $j\in V$ is a $n$ hop neighbor of $i\in V$ if there is a path of length $n$ between $i$ and $j$ in $G$. We will denote $n$ hop neighbors of $i$ as $n-hop(i)$. We refer 1-hop neighbors as neighbors.
\end{definition}

\begin{definition}[Tree]
An undirected graph $G=(V,A)$ is called a \emph{tree} if there is a unique path connecting any two nodes in $V$. 
\end{definition}
\begin{definition}[Leaf Node/ non-leaf nodes] In a tree, $G^T=(V,A)$, a node $i\in V$ that has only one neighbor ($1-hop(i)$) is called a leaf node. Nodes with more than one neighbor are called non-leaf nodes.  
\end{definition}
\noindent In figure ~\ref{fig:gentop}), $\{2,4,7,8,9\}$ are non-leaf nodes while the rest are leaf nodes. 
\begin{definition}[Radial Systems]
If the generative topology $G^T$ associated with a generative model $(\trf (z),e)$ is a tree, then the generative system is called a \emph{radial system}.
\end{definition}
\noindent Figure ~\ref{fig:tree} represents a radial system.
\begin{definition}[Kins]
Suppose the generative graph is $G=(V,A)$. The \emph{kins} of a node $i$, $kin(i)$, is given by: $kin(i)=\{ j|j\to i\text{ or } i\to j \text{ or } i\to k\gets j \text{ holds in } G\}$. 
\end{definition}
\noindent In figure ~\ref{fig:gengraph}), for example, $kin(2)=\{1,3,4,5,6,7\}$.
\begin{definition}[Moral-Graph]
\label{def:moralGraph}
Suppose $G=(V,\overrightarrow{A})$ is a generative graph. Its moral-graph is the undirected graph $G^M=(V,A^M)$ where $A^M:=\{i-j|j\in V, i\in kin(j)\}.$ 
\end{definition}
The moral graph for a radial system is the graph formed by adding undirected edges between 2-hop neighbors in the generative topology. See Figure ~\ref{fig:truemoral}) for example. 
\subsection{Moral Graph Inference from Time Series}
The relationship between the sparsity pattern of inverse PSD matrix of time sereis, $x$ and the moral graph, $G^M$ of a is described by the following result from \cite{MatSal12}.
\begin{theorem}
\label{thm:KinshipInf}
{\it
Consider a generative model $(\trf(z),e)$ consisting of N nodes with generative graph $G$. Let $x=(x_1,\dots,x_N)^T$ denote the time series measurements. Let $\Phi_{xx}$ be the power spectral density matrix  of the vector process $x$. Then the $(j,i)$ entry of $\Phi_{xx}^{-1}$ is non zero implies that $i$ is a kin of $j$.}
\end{theorem}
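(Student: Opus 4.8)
\emph{Proof plan.} The plan is to propagate the structure of the generative model through the spectral factorization of $\Phi_{xx}$ and then read off the support of its inverse. First I would rewrite the generative equation \eqref{eq:dimCompact} as $(I-\trf(z))\,x = e$; since the generative model is well posed, $I-\trf(z)$ is invertible and $x=(I-\trf(z))^{-1}e$. Taking power spectral densities gives
\[
\Phi_{xx}(z)=(I-\trf(z))^{-1}\,\Phi_e(z)\,(I-\trf(z))^{-*},
\]
where $\Phi_e(z)$ is the PSD of $e=(e_1,\dots,e_N)^T$. The crucial structural fact is that $\Phi_e(z)$ is \emph{diagonal}: since $e_i(z)=w_i(z)/S_i(z)$ and the innovations $w_i$ are mutually independent zero-mean WSS processes, every cross-spectrum $\Phi_{e_ie_j}$ with $i\neq j$ vanishes. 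Hence $\Phi_e^{-1}(z)$ is diagonal as well, and inverting the factorization yields
\[
\Phi_{xx}^{-1}(z)=(I-\trf(z))^{*}\,\Phi_e^{-1}(z)\,(I-\trf(z)).
\]

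Next I would expand the $(j,i)$ entry of this product. Writing $L(z):=I-\trf(z)$ and using $M^*(z)=M(z^{-1})^T$,
\[
\Phi_{xx}^{-1}(j,i)=\sum_{k=1}^{N} L^{*}(j,k)\,\Phi_e^{-1}(k,k)\,L(k,i),
\]
and since $L^{*}(j,k)=L(k,j)(z^{-1})$ has the same zero/nonzero pattern as $L(k,j)$, this entry can fail to be identically zero only if there is an index $k$ with both $L(k,j)\neq 0$ and $L(k,i)\neq 0$. I would then translate the support of $L=I-\trf$ into the generative graph $G$: by the definition $\overrightarrow{A}=\{i\to j\mid \trf_{ji}\neq 0\}$, the entry $L(k,\ell)$ is nonzero precisely when $k=\ell$ or $\ell\to k$ in $G$. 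Applying this to $L(k,j)$ and $L(k,i)$ leaves, for any such $k$, four possibilities: (i) $k=j=i$; (ii) $k=j$ and $i\to j$; (iii) $k=i$ and $j\to i$; (iv) $j\to k$ and $i\to k$, i.e. the v-structure $j\to k\gets i$. In cases (ii)--(iv) one of ``$i\to j$'', ``$j\to i$'', ``$j\to k\gets i$'' holds, which is exactly the defining condition for $i\in kin(j)$. The cleanest way to phrase this is via the contrapositive: if $i\notin kin(j)$, then for every $k$ at least one of $L(k,j)$, $L(k,i)$ is the zero transfer function, so every summand above vanishes and $\Phi_{xx}^{-1}(j,i)\equiv 0$.

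I expect the only real subtleties to be bookkeeping rather than conceptual. One must keep the index convention of $\trf$ consistent with the edge orientation in $\overrightarrow{A}$ --- a dropped transpose would invert the conclusion --- and must track the conjugate transpose $M^*(z)=M(z^{-1})^T$ carefully enough to see that the reflection $z\mapsto z^{-1}$ preserves supports. Finally one should decide how the diagonal case $i=j$, which always contributes to $\Phi_{xx}^{-1}$, relates to the statement; it is either excluded from the claim or absorbed by the convention that every node is its own kin. Note that, because the contrapositive argument makes \emph{every} term of the sum identically zero when $i\notin kin(j)$, no genericity or ``no-cancellation'' hypothesis is needed for the implication asserted here --- such a hypothesis would be relevant only for the converse direction.
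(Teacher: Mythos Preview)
Your proposal is correct and follows essentially the same route the paper sketches: invert the spectral factorization to obtain $\Phi_{xx}^{-1}=(I-\trf)^{*}\Phi_e^{-1}(I-\trf)$ with $\Phi_e^{-1}$ diagonal, then read off the support of the $(j,i)$ entry from the support of $I-\trf$. The paper in fact cites this result from \cite{MatSal12} and only outlines the factorization (with the explicit entrywise formula \eqref{eq:phiXinv} given for the radial case), whereas you carry out the general support analysis in full; your contrapositive formulation and your remarks on the $i=j$ case and on why no genericity assumption is needed for this direction are all in order.
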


The basis of the above result comes from the structure of the matrix $\trf(z)$. For a radial system, we have $\trf_{ij}(z)\ne 0$ if and only if $i-j$ holds in $G^T$. From \eqref{eq:dimCompact} we can express $\Phi_{xx}$ for a radial system as follows:
\begin{equation}
\Phi_{xx} = (I-\trf(z))^{-1}\Phi_e(I-\trf(z))^{-*}.    
\end{equation}
The inverse PSD, $\Phi_{xx}^{-1}$, is given by:
\begin{align}
\Phi_{xx}^{-1} = (I-\trf(z))^{*}\Phi_e^{-1}(I-\trf(z))
\end{align}
We have the following:
\begin{equation}\label{eq:phiXinv}
\Phi^{-1}_{xx}(i,j)=\begin{cases}
-\trf_{ij}(z)\Phi ^{-1}_{e_i}-\trf_{ji}(z^{-1})\Phi ^{-1}_{e_{j}}, j \in 1-hop(i)\\
\trf_{ki}(z^{-1})\trf_{kj}(z)\Phi ^{-1}_{e_{k}}, j\in 2-hop(i) \\
\qquad \mbox{and } k\in 1-hop(i), k\in 1-hop(j)\\
\Phi ^{-1}_{e_i}+\sum _{k\in 1-hop(i)}|\trf_{ki}|^2\Phi ^{-1}_{e_k}, i=j\\
0, \mbox{otherwise.}
\end{cases}
\end{equation}
\begin{remark}\label{rem:otherdirectionMatSal} For $i$ and $j$ being kins but $\Phi_{xx}^{-1}(i,j)$ to be zero, the transfer functions in $\trf $ must be  belong to a set of measure zero on space of system parameters. For example,  system dynamics with transfer functions being zero or a static system with all noise sequences being identical. Therefore, except for these restrictive cases, the result in Theorem ~\ref{thm:KinshipInf}  is both necessary and sufficient. See \cite{MatSal12} for more details. For subsequent discussion and results to follow, we assume such pathological cases don't hold. 
\end{remark}
\section{Uncertainty Description}\label{sec:uncertain}
In this section we provide a description for how uncertainty affects the time-series $x_i.$ We interchangeably use corruption or perturbation to denote imperfections/uncertainties in measurement information.
\subsection{General Perturbation Models}\label{sec:Models}
Consider $i^{th}$ node in a generative graph and it's associated unperturbed time-series $x_i$. The corrupt data-stream $u_i$ associated with $i$ is assumed to follow:
\begin{equation}\label{eq:corruptionModel}
u_i[t]=g_i(x_i[\cdot],u_i[\cdot],\zeta_i[t ]),
\end{equation}
where  $u_i$ can depend dynamically (can be non-causal) on $x_i$, its own values in the strict past, and $\zeta_i[t]$ which represents
a stochastic process.
We highlight a few important perturbation models that are practically relevant. See \cite{SLS17network} for more details.
\subsubsection*{Temporal Uncertainty}
Consider a node $i$ in a generative graph. Suppose $t$ is the true clock index but the node $i$ measures a noisy clock index which is given by a random process, $\zeta _i[t]$. One such probabilistic model is given by the following Bernoulli process: 
\begin{equation*}
\zeta _i[t]=\begin{cases}
t_1, & \textrm{ with probability } p_i \\
t_2, & \textrm{ with probability } (1-p_i),
\end{cases}
\end{equation*}
where $t_1$ and $t_2$ are integers such that at least one of $t_1$ and $t_2$ are not equal to $0$. 
Randomized delays in information transmission can be modeled as a convolution operation with the impulse function $\delta [t]$ shifted by $\zeta _i[t]$ as follows :
\begin{equation}
\label{eq:randDelayMdl}
u_i[t] =\delta [t+\zeta _i[t]]*x_i[t].
\end{equation} 
\subsubsection*{Noisy Filtering}
Given a node $i$ in a generative graph, the data-stream $x_i$ is filtered by a stable filter $L_i$ and corrupted
with independent measurement noise $\zeta _i[\cdot ]$. This perturbation model is described by:
\begin{equation}
u_i[t] = (L_i * x_i)[t] + \zeta _i[t].
\end{equation}
\subsubsection*{Packet Drops}
The measurement $u_i[t]$ corresponding to a ideal measurement $x_i[t]$ packet reception at time $t$ can be stochastically modeled as: 
\begin{equation}
  \label{eq:packetDrop}	
u_i[t]=\begin{cases}
x_i[t], & \textrm{ with probability } p_i\\
u_i[t-1], & \textrm{ with probability } (1-p_i).
\end{cases}
\end{equation}
Consider a Bernoulli process $\zeta_i$ described by,
\begin{equation*}
\zeta _i[t]=\begin{cases}
1, & \textrm{ with probability } p_i \\
0, & \textrm{ with probability } (1-p_i).
\end{cases}
\end{equation*}
The corruption model in \eqref{eq:corruptionModel} takes the form: 
\begin{equation}
u_i[t]=\zeta _i[t]x_i[t]+(1-\zeta _i[t])u_i[t-1].
\end{equation}
\subsection{Corruption of power spectra}\label{subsec:corrupt_psd}
In all the perturbation models illustrated above, $u_i$ will have cross-spectra and power spectra of the form:
\begin{subequations}
\label{eq:spectrumConditions}
\begin{align}
  \Phi_{u_i x_i}(z) &= h_i(z) \Phi_{x_ix_i}(z) \\
  \Phi_{u_i u_i}(z) &= h_i(z)h_i(z^{-1}) \Phi_{x_i x_i}(z) + d_i(z),
\end{align}
\end{subequations}
for some transfer functions $h_i$ and $d_i$. If the perturbations were
deterministic and time invariant so that $u_i = h_i(z) x_i$, then the
power spectrum formulas would hold with $d_i(z)=0$. However, the
randomized perturbations imply that $d_i(z)\ne 0$. A more rigorous characterization of the perturbation models is described in  \cite{SLSarXivlinear}. 
\subsection*{Structure of inverse power spectra due to corruption}
Here, we will describe the structure of $\Phi_{uu}^{-1}(\omega )$. We will use the following equations and setup for deriving subsequent results. For compact notation, we will often drop the $\omega$ arguments.
  
  For $p=1,\ldots,N$ if $p$ is
  not a perturbed node, set $h_p(\omega) =1$ and $d_p(\omega) = 0$. With this notation,
  \eqref{eq:spectrumConditions} implies that the entries of $\Phi_{uu}$ are
  given by:
  \begin{equation*}
    (\Phi_{uu})_{pq} = \begin{cases}
      h_p  (\Phi_{xx})_{pq} h_q^{*}& \textrm{if } p\ne q \\
      h_p(\Phi_{xx})_{pp} h_p^{*}+ d_p & \textrm{if } p = q
    \end{cases}
  \end{equation*}
When $p\ne
  q$, there is no $d$ term because the perturbations were assumed to
  be independent.

  In matrix notation, we have that:
  \begin{equation*}
    \Phi_{uu} = H \Phi_{xx} H^{*} + \sum_{k=1}^n D_{v_k}
  \end{equation*}
  where $H$ is the diagonal matrix with entries $h_p$ on the
  diagonal and $D_{v_k}(\omega) = b_{v_k} d_{v_{k}}(\omega) b_{v_{k}}^T$ where
  $b_{v_{k}}$ is the canonical unit vector with $1$ at entry
  $v_{k}$. 

  For $k=0,\ldots,n-1$ set $\Psi_k = H \Phi_{xx} H^{*} + \sum_{m=1}^k
  D_{v_m}$. Here, $\Phi _{uu}^{-1}=\Psi ^{-1}_n$. We can inductively define these matrices as:
  \begin{subequations}
  \begin{align}
    \Psi_0 &= H \Phi_{xx}H^{*} \\
    \label{eq:inductiveMatrices}
    \Psi_{k+1} &= \Psi_k + b_{v_{k+1}}d_{v_{k+1}} b_{v_{k+1}}^T
  \end{align}
\end{subequations}
Note that $\Psi _0^{-1}(i,j)$ can be expressed as follows:
\begin{equation}
\label{eq:psi0}
    \Psi ^{-1}_0(i,j)=
    \frac{1}{\overline {h_i}(\omega)}\Phi _{xx}^{-1}(i,j)\frac{1}{h_j(\omega)}
\end{equation}

Combining Woodbury matrix identity in ~\eqref{eq:inductiveMatrices} implies that
\begin{equation}\label{eq:invPsiInduction}
  \Psi ^{-1}_{k+1}=\Psi ^{-1}_{k}-\Gamma _{k+1}
 \end{equation}
where $\Gamma_{k+1}:= \Psi^{-1} _{k}b_{v_{k+1}}b^T_{v_{k+1}}\Psi
^{-1}_{k}\Delta ^{-1}_{k+1},$ and $\Delta_{k+1}=d_{v_{k+1}}^{-1}+\Psi ^{-1}_{k}(v_{k+1},v_{k+1})$ which is a scalar. 

\subsection{Network identification in presence of corruption}
Here, we describe how structure learning using sparsity in inverse PSD of corrupted data streams leads to inference of spurious links. 
\begin{definition}[Perturbed Graph]
\label{def:corruptKG}
Let $G^M=(V,A^M)$ be a moral graph. Suppose $Y\subset V$ is
the set of corrupt nodes satisfying ~\eqref{eq:spectrumConditions}. Then the perturbed graph of
$G^M$ with respect to set $Y$ is the graph
$G ^U=(V,A^U)$ such that $i-j \in A^U$ if either
$i-j\in A^M$ or there is a path from  $i$ to $j$ in $G ^M$
such that all intermediate nodes are in $Y$. 
\end{definition}
We have the following result from \cite{SLS17network}.
\begin{theorem}\label{thm:multiperturbation}
\it{
Consider a generative model $(\trf(z),e)$ consisting of $N$ nodes with the moral graph
$G^M=(V,A^M)$. Let $\{v_1,v_2,\dots,v_n\}$ be the set of
$n$ perturbed nodes where each perturbation satisfies
\eqref{eq:spectrumConditions}.  
Then, 
$(\Phi_{uu}^{-1}(z))_{pq}\ne 0$ implies that $p$ and $q$ are neighbors in
the perturbed graph $G^U$. 
}
\end{theorem}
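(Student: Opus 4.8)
The plan is to track the off-diagonal support of the inverse power spectrum as the rank-one corrections are introduced one perturbed node at a time, following the inductive construction $\Psi_0,\Psi_1,\ldots,\Psi_n$ recalled before the statement, for which $\Phi_{uu}^{-1}=\Psi_n^{-1}$. The key structural fact is the Woodbury relation \eqref{eq:invPsiInduction}, $\Psi_{k+1}^{-1}=\Psi_k^{-1}-\Gamma_{k+1}$, whose entries satisfy $\Gamma_{k+1}(p,q)=\Psi_k^{-1}(p,v_{k+1})\,\Psi_k^{-1}(v_{k+1},q)\,\Delta_{k+1}^{-1}$ with $\Delta_{k+1}$ a scalar. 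Alongside the matrices I would introduce a companion monotone sequence of undirected graphs $G_0\subseteq G_1\subseteq\cdots\subseteq G_n$, with $G_0:=G^M$ and $G_{k+1}$ obtained from $G_k$ by adding every edge among $v_{k+1}$ together with its neighbors in $G_k$ (turning that closed neighborhood into a clique). The statement to prove by induction on $k$ is: for $p\ne q$, $\Psi_k^{-1}(p,q)\ne 0$ implies $p-q$ is an edge of $G_k$.

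For the base case I would invoke \eqref{eq:psi0}: since the $h_p$ are scalar transfer functions, $\Psi_0^{-1}(p,q)\ne 0$ forces $\Phi_{xx}^{-1}(p,q)\ne 0$, and Theorem~\ref{thm:KinshipInf} together with the definition of the moral graph gives $p-q\in A^M$, i.e.\ an edge of $G_0$. For the inductive step, suppose the claim holds for $k$ and $\Psi_{k+1}^{-1}(p,q)\ne 0$ with $p\ne q$. By the Woodbury relation either $\Psi_k^{-1}(p,q)\ne 0$, in which case $p-q$ is an edge of $G_k$ and hence of $G_{k+1}$ by monotonicity, or $\Gamma_{k+1}(p,q)\ne 0$, which forces $\Psi_k^{-1}(p,v_{k+1})\ne 0$ and $\Psi_k^{-1}(v_{k+1},q)\ne 0$; the induction hypothesis then places both $p$ and $q$ in the closed neighborhood of $v_{k+1}$ in $G_k$, so $p-q$ is an edge of $G_{k+1}$ by construction. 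This establishes the claim, and in particular the off-diagonal support of $\Phi_{uu}^{-1}=\Psi_n^{-1}$ is contained in the edge set of $G_n$.

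It then remains to show that $G_n$ is a subgraph of the perturbed graph $G^U$. I would prove the sharper statement that every edge $p-q$ of $G_k$ admits a path from $p$ to $q$ in $G^M$ all of whose intermediate nodes lie in $\{v_1,\ldots,v_k\}$, again by induction on $k$ (vacuous at $k=0$). In the step, a new edge $p-q$ of $G_{k+1}$ arises from $p$ and $q$ both lying in the closed neighborhood of $v_{k+1}$ in $G_k$: if one of them is $v_{k+1}$ the required path is supplied directly by the induction hypothesis, and otherwise concatenating the $G^M$-path from $p$ to $v_{k+1}$ with the $G^M$-path from $v_{k+1}$ to $q$ produces a walk in $G^M$ whose intermediate nodes lie in $\{v_1,\ldots,v_{k+1}\}$, from which a simple path with the same property is extracted by deleting cycles. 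Since $\{v_1,\ldots,v_n\}$ is exactly the corrupt set $Y$, every edge of $G_n$ is an edge of $G^U$ by Definition~\ref{def:corruptKG}, and therefore $\Phi_{uu}^{-1}(p,q)\ne 0$ implies that $p$ and $q$ are neighbors in $G^U$.

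I expect the bookkeeping of the two nested inductions to be the part that needs care: one must fix the perturbation order $v_1,\ldots,v_n$ once and for all, keep the chain $G_0\subseteq\cdots\subseteq G_n$ genuinely monotone so that the ``$\Psi_k^{-1}(p,q)\ne 0$'' branch persists into $G_{k+1}$, and be careful that the closed neighborhood of $v_{k+1}$ is taken in $G_k$ rather than in $G^M$ — this accumulation is precisely what generates long spurious chains running through consecutive corrupt nodes, and the walk-to-path reduction is what reconciles it with the path-based definition of $G^U$. A minor separate remark, in the spirit of Remark~\ref{rem:otherdirectionMatSal}, is that every assertion above is of the form ``nonzero $\Rightarrow$ edge'', so possible cancellations in $\Psi_{k+1}^{-1}=\Psi_k^{-1}-\Gamma_{k+1}$ only help and need not be tracked.
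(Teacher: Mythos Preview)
Your argument is correct and is precisely the approach for which the paper lays the groundwork: the inductive matrices $\Psi_k$, the Woodbury relation \eqref{eq:invPsiInduction}, and the entrywise formula \eqref{eq:psi0} are introduced in Section~\ref{subsec:corrupt_psd} exactly so that one can track the off-diagonal support of $\Psi_k^{-1}$ one rank-one update at a time, as you do. Note, however, that the paper itself does not supply a proof of Theorem~\ref{thm:multiperturbation}; it quotes the result from \cite{SLS17network} and moves on, so there is no in-paper proof to compare against beyond the fact that your two nested inductions are the natural way to cash in the machinery the paper provides. Your bookkeeping remarks are apt, and the walk-to-path reduction is the right bridge to Definition~\ref{def:corruptKG}.
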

Consider a chain network consisting of 7 nodes with bidirectional dynamics between adjacent nodes as shown in ~\ref{fig:simple Chain}. The true moral graph is depicted in figure ~\ref{fig:truemoral}. Suppose $4$ is corrupted. Applying, Theorem \ref{thm:multiperturbation}, the inferred undirected graph is shown in figure ~\ref{fig:perturb 4}. 
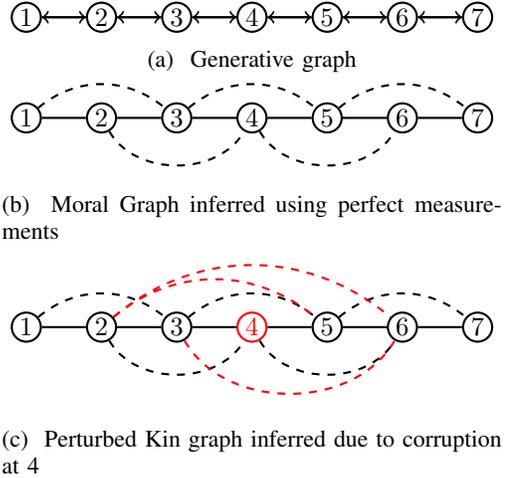
\begin{figure}[t]
  \centering
     \begin{subfigure}{0.75\columnwidth}
     \centering
         \begin{tikzpicture}
                 \tikzstyle{vertex}=[circle,fill=none,minimum size=11pt,inner sep=0pt,thick,draw]
         \tikzstyle{pvertex}=[color=red,circle,fill=white,minimum size=11pt,inner sep=0pt,thick,draw]
           \node[vertex] (n1) {$1$};
           \node[vertex, right of=n1] (n2) {$2$};
           \node[vertex,right of=n2] (n3) {$3$};
           \node[vertex,right of=n3] (n4) {$4$};
           \node[vertex,right of=n4] (n5) {$5$};
           \node[vertex,right of=n5] (n6) {$6$};
           \node[vertex,right of=n6] (n7) {$7$};

           \draw[->,thick] (n1)--(n2);
           \draw[->,thick] (n2)--(n3);
           \draw[->,thick] (n3)--(n4);
           \draw[->,thick] (n4)--(n5);
           \draw[->,thick] (n5)--(n6);
           \draw[->,thick] (n6)--(n7);
                      \draw[->,thick] (n2)--(n1);
           \draw[->,thick] (n3)--(n2);
           \draw[->,thick] (n4)--(n3);
           \draw[->,thick] (n5)--(n4);
           \draw[->,thick] (n6)--(n5);
           \draw[->,thick] (n7)--(n6);
           
         \end{tikzpicture}
         \subcaption{\label{fig:simple Chain} Generative graph}
   \end{subfigure}
     \begin{subfigure}{0.75\columnwidth}
     \centering
         \begin{tikzpicture}
                 \tikzstyle{vertex}=[circle,fill=none,minimum size=11pt,inner sep=0pt,thick,draw]
         \tikzstyle{pvertex}=[color=red,circle,fill=white,minimum size=11pt,inner sep=0pt,thick,draw]
           \node[vertex] (n1) {$1$};
           \node[vertex, right of=n1] (n2) {$2$};
           \node[vertex,right of=n2] (n3) {$3$};
           \node[vertex,right of=n3] (n4) {$4$};
           \node[vertex,right of=n4] (n5) {$5$};
           \node[vertex,right of=n5] (n6) {$6$};
           \node[vertex,right of=n6] (n7) {$7$};

           \draw[thick] (n1)--(n2);
           \draw[thick] (n2)--(n3);
           \draw[thick] (n3)--(n4);
           \draw[thick] (n4)--(n5);
           \draw[thick] (n5)--(n6);
           \draw[thick] (n6)--(n7);
           \draw[thick,dashed] (n1) to[out=40,in=140] (n3);
           \draw[thick,dashed] (n3) to[out=40,in=140] (n5);
           \draw[thick,dashed] (n5) to[out=40,in=140] (n7);
           \draw[thick,dashed] (n2) to[out=300,in=240] (n4);
           \draw[ thick,dashed] (n4) to[out=300,in=240] (n6);
         \end{tikzpicture}
         \subcaption{\label{fig:truemoral} Moral Graph inferred using perfect measurements}
   \end{subfigure}
     \begin{subfigure}{0.75\columnwidth}
     \centering
         \begin{tikzpicture}
                 \tikzstyle{vertex}=[circle,fill=none,minimum size=11pt,inner sep=0pt,thick,draw]
         \tikzstyle{pvertex}=[color=red,circle,fill=white,minimum size=11pt,inner sep=0pt,thick,draw]
           \node[vertex] (n1) {$1$};
           \node[vertex, right of=n1] (n2) {$2$};
           \node[vertex,right of=n2] (n3) {$3$};
           \node[pvertex,right of=n3] (n4) {$4$};
           \node[vertex,right of=n4] (n5) {$5$};
           \node[vertex,right of=n5] (n6) {$6$};
           \node[vertex,right of=n6] (n7) {$7$};

          \draw[thick] (n1)--(n2);
           \draw[thick] (n2)--(n3);
           \draw[thick] (n3)--(n4);
           \draw[thick] (n4)--(n5);
           \draw[thick] (n5)--(n6);
           \draw[thick] (n6)--(n7);
           \draw[thick,dashed] (n1) to[out=40,in=140] (n3);
           \draw[thick,dashed] (n3) to[out=40,in=140] (n5);
           \draw[thick,dashed] (n5) to[out=40,in=140] (n7);
           \draw[thick,dashed] (n2) to[out=300,in=240] (n4);
           \draw[ thick,dashed] (n4) to[out=300,in=240] (n6);
           \draw[red, thick,dashed] (n2) to[out=40,in=140] (n5);
           \draw[red,thick,dashed] (n2) to[out=40,in=140] (n6);
           \draw[red, thick,dashed] (n3) to[out=300,in=240] (n6);
         \end{tikzpicture}
         \subcaption{
         \label{fig:perturb 4} Perturbed Kin graph inferred due to corruption at 4
         }
   \end{subfigure}
   \caption{
    \label{fig:corrupt spectra} This figure shows how unreliable measurements at a node can yield in erroneous dynamic influences. 
  }
\end{figure}
\section{Exact Topology Learning}
The first step towards exact topology learning is to determine the location of all the corrupt nodes. This is presented in the following subsection.
We consider the following assumption on the location of corrupt nodes:
\begin{assumption}\label{assume:Zloc}
\begin{enumerate}[C1)]
    \item \label{case:leaf} Corrupt nodes are at least 3 hops away from all leaf nodes in the generative topology.
    \item \label{case:corrupt}Corrupt nodes are at least 3 hops away from each other in the generative topology.
    \end{enumerate}
       \end{assumption}
   \begin{remark}
   The above condition C\ref{case:leaf}) implies that the corrupt nodes are located deep in the network such that its effects are felt by the agents that have perfect measurements. 
   \end{remark}  
\subsection{Corruption Detection} \label{sec:corruptionDetection}In this section we describe a method to locate the corrupt nodes in the inferred perturbed graph for radial dynamical systems.
\subsection*{Neighborhood characterization}\label{sec:clique}
In this subsection, we characterize the neighborhood set of leaf and corrupt nodes. The following proposition will be needed for the development to follow. 
   \begin{proposition}\label{prop:nbru}
\it{Consider a radial system with generative topology $G^T=(V,A)$ consisting of $N$ nodes with the moral graph
$G^M=(V,A^M)$. Let $1-hop (i)$ and $2-hop(i)$ denote the set of 1-hop and 2 hop neighbors of $i$ in $G^T$. Let $Y\subset V $ be the set of perturbed nodes where each perturbation satisfies
\eqref{eq:spectrumConditions} and Assumption \ref{assume:Zloc}.
Suppose $G ^U=(V,A^U)$ is the perturbed graph inferred using Theorem \ref{thm:multiperturbation}. Let neighbors of node $i$ in $G^U$ ben $\nbr_u(i)$. If $i$ is a leaf node in $G^T$ or $i\in Y$, then $\nbr _u(i)=1-hop (i)\cup 2-hop(i)$.}
   \end{proposition}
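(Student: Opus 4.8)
The plan is to prove the two inclusions $1-hop(i)\cup 2-hop(i)\subseteq \nbr_u(i)$ and $\nbr_u(i)\subseteq 1-hop(i)\cup 2-hop(i)$ separately, working throughout with the purely combinatorial Definition~\ref{def:corruptKG} of $G^U$. The first inclusion is essentially free: for a radial system the moral graph $G^M$ joins exactly those pairs of $G^T$-vertices that are at hop-distance $1$ or $2$ (this is the description given right after Definition~\ref{def:moralGraph}, and is also read off the support pattern in \eqref{eq:phiXinv}), so the $G^M$-neighborhood of any node $i$ equals $1-hop(i)\cup 2-hop(i)$. Since $A^M\subseteq A^U$ by the first clause of Definition~\ref{def:corruptKG}, every $G^M$-neighbor of $i$ is also a $G^U$-neighbor, giving $1-hop(i)\cup 2-hop(i)\subseteq \nbr_u(i)$ with no use of Assumption~\ref{assume:Zloc} at all.

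For the reverse inclusion I would fix an arbitrary $j\in\nbr_u(i)$ and branch on the two clauses of Definition~\ref{def:corruptKG}. If $i-j\in A^M$, then $j\in 1-hop(i)\cup 2-hop(i)$ and we are done. Otherwise there is a path $i-y_1-\cdots-y_m-j$ in $G^M$ with $m\ge 1$ and $y_1,\dots,y_m\in Y$. The entire argument then hinges on the single edge $i-y_1$: being an edge of $G^M$, it forces $y_1\in 1-hop(i)\cup 2-hop(i)$, i.e. $1\le \mathrm{dist}_{G^T}(i,y_1)\le 2$; yet $y_1\in Y$.

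Now I would invoke Assumption~\ref{assume:Zloc}. If $i$ is a leaf node of $G^T$, clause C\ref{case:leaf}) says every corrupt node is at $G^T$-distance at least $3$ from $i$, contradicting $\mathrm{dist}_{G^T}(i,y_1)\le 2$. If instead $i\in Y$, then $y_1$ and $i$ are two distinct corrupt nodes (distinct because $\mathrm{dist}_{G^T}(i,y_1)\ge 1$), so clause C\ref{case:corrupt}) forces $\mathrm{dist}_{G^T}(i,y_1)\ge 3$, again a contradiction. Hence for such $i$ the second clause of Definition~\ref{def:corruptKG} can never apply, so $\nbr_u(i)\subseteq 1-hop(i)\cup 2-hop(i)$, and combining with the first inclusion yields the claimed equality.

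I expect the only real subtlety to be the bookkeeping translation between adjacency in the moral graph and hop-distance in the generative topology — namely pinning down that a $G^M$-edge incident to $i$ must land inside $1-hop(i)\cup 2-hop(i)$, which is precisely where the ``$3$ hops'' slack in Assumption~\ref{assume:Zloc} is consumed (it excludes a corrupt node from ever being the first interior vertex of a $G^U$-augmenting path rooted at $i$). If one wants the statement interpreted directly as a claim about the sparsity pattern of $\Phi_{uu}^{-1}$ rather than about the combinatorial graph $G^U$, the upgrade from the containment in Theorem~\ref{thm:multiperturbation} to an equality is supplied by the standing non-degeneracy convention of Remark~\ref{rem:otherdirectionMatSal}; no additional idea is required there.
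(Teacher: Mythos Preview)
Your proof is correct and follows essentially the same approach as the paper: both hinge on observing that the first interior (corrupt) vertex of any $G^M$-path witnessing a $G^U$-edge at $i$ must lie in $1\text{-}hop(i)\cup 2\text{-}hop(i)$, which immediately contradicts C\ref{case:leaf}) or C\ref{case:corrupt}). Your write-up is in fact a bit more careful than the paper's --- you explicitly establish the inclusion $1\text{-}hop(i)\cup 2\text{-}hop(i)\subseteq \nbr_u(i)$ and allow the augmenting path to have arbitrary length $m\ge 1$, whereas the paper tacitly assumes the first inclusion and writes the path as $i-v_1-v_2-j$.
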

\begin{proof}
We will show that no additional nodes excluding $1-hop(i)$ or $2-hop(i)$ neighbors exist as neighbors of $i$ in $G^U$. Suppose $j$ is a neighbor of $i$ in $G^U$ such that $j\notin 1-hop(i)\cup 2-hop(i)$. Then, by definition of perturbed graph, there should be a path $i-v_1-v_2-j$ in $G^M$ such that $v_1$ and $v_2$ are corrupt nodes. This implies $v_1$ belongs to $1-hop(i)$ or $2-hop(i)$ in $G^T$. Suppose $i$ is a corrupt node.  This contradicts condition C\ref{case:corrupt}). Suppose $i$ is a leaf node. This contradicts condition C\ref{case:leaf}).Therefore, $\nbr _u(i)=1-hop (i)\cup 2-hop(i)$.
%
%
%
\end{proof}
The following lemma describes a topological method to detect a set of candidate nodes which contains only leaf and corrupt nodes using the perturbed graph. It states that only leaf nodes and corrupt nodes has a neighborhood that forms a clique in the perturbed graph.
\begin{lemma}\label{lemma:leafCorrupt}
\it{Consider a radial dynamical system with generative topology $G^T=(V,A^T).$ Let $Y\subset V$ be the set of perturbed nodes where each perturbation satisfies
\eqref{eq:spectrumConditions} and assumption \ref{assume:Zloc}.
Suppose $G ^U=(V,A^U)$ is the perturbed graph inferred using theorem \ref{thm:multiperturbation}. Consider any node $i$ in $V$. Neighbors of node $i$ in $G^U$, $\mathcal{N}_u(i)\cup \{i\}$ will form a clique in $G ^U$ if and only if $i$ is a leaf node in generative topology $G ^T$ or $i$ is a corrupt node.}
\end{lemma}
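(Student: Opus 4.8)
The plan is to prove the two implications separately, using Proposition~\ref{prop:nbru} for the ``if'' direction and a distance‑counting argument in the tree $G^T$ for the ``only if'' direction. Write $d(\cdot,\cdot)$ for the path length between two nodes in $G^T$, and recall that $G^M$ is obtained from $G^T$ by joining every pair of nodes at distance $2$, so $i-j\in A^M$ exactly when $d(i,j)\in\{1,2\}$. The one structural fact I would record first is a consequence of C\ref{case:corrupt}): since any two corrupt nodes are at least three hops apart in $G^T$, no two corrupt nodes are adjacent in $G^M$; hence in Definition~\ref{def:corruptKG} every $G^M$‑path all of whose interior vertices lie in $Y$ has exactly one interior vertex. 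Equivalently, $i-j\in A^U$ precisely when either $i-j\in A^M$ or some corrupt node $v\in Y$ satisfies $d(v,i)\le 2$ and $d(v,j)\le 2$. This turns the whole statement into tree‑distance bookkeeping.

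For the ``if'' direction, assume $i$ is a leaf of $G^T$ or $i\in Y$; by Proposition~\ref{prop:nbru}, $\mathcal{N}_u(i)=1\text{-}hop(i)\cup 2\text{-}hop(i)$, i.e. the $G^M$‑neighborhood of $i$. If $i\in Y$, then for any distinct $j,k\in\mathcal{N}_u(i)$ the $G^M$‑path $j-i-k$ has its unique interior vertex $i$ in $Y$, so $j-k\in A^U$, and $i-j\in A^M\subseteq A^U$ for each such $j$; thus $\mathcal{N}_u(i)\cup\{i\}$ is a clique. If instead $i$ is a leaf with unique $G^T$‑neighbor $p$, then $1\text{-}hop(i)=\{p\}$ and $2\text{-}hop(i)$ is the set of the other $G^T$‑neighbors of $p$, so every pair of vertices in $\mathcal{N}_u(i)\cup\{i\}$ is either a $G^T$‑edge (namely $i-p$, or $p$ with one of its neighbors) or a $G^M$‑edge between two nodes at distance $2$ in $G^T$ sharing the common neighbor $p$ (including $i$ with one of its $2$‑hop neighbors). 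Either way we obtain a clique.

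For the ``only if'' direction I would argue the contrapositive: if $i$ is a non‑leaf node with $i\notin Y$, produce two vertices of $\mathcal{N}_u(i)\cup\{i\}$ that are non‑adjacent in $G^U$. Being non‑leaf, $i$ has at least two $G^T$‑neighbors; away from the degenerate case in which $G^T$ is a single star with at least two leaves (which admits no corrupt node at all under C\ref{case:leaf})), $i$ has a non‑leaf neighbor $a$. Choose $a'\in 1\text{-}hop(a)\setminus\{i\}$ — it satisfies $d(i,a')=2$, so $a'\in\mathcal{N}_u(i)$ — and a second neighbor $b$ of $i$. If $b$ is a leaf, then $b\in\mathcal{N}_u(i)$ and $d(b,a')=3$ along $b-i-a-a'$; any corrupt bridge $v$ would satisfy $d(v,b)\le2$ and $d(v,a')\le2$, which by uniqueness of tree paths forces $v$ onto this path with $v\in\{i,a\}$, but $i\notin Y$ by hypothesis and $a\notin Y$ since $a$ is two hops from the leaf $b$ (C\ref{case:leaf})); hence $b-a'\notin A^U$. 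If $b$ is non‑leaf, pick $b'\in 1\text{-}hop(b)\setminus\{i\}$; then $b'\in\mathcal{N}_u(i)$, $a'\neq b'$ (they lie in different components of $G^T\setminus\{i\}$), and $d(a',b')=4$ along $a'-a-i-b-b'$, whose only vertex within two hops of both $a'$ and $b'$ is its midpoint $i\notin Y$; hence $a'-b'\notin A^U$. In both cases $\mathcal{N}_u(i)\cup\{i\}$ fails to be a clique.

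I expect the ``only if'' direction to be the delicate part, and within it the crux is ruling out a corrupt bridge for the chosen pair: one must use uniqueness of paths in the tree to list \emph{every} node within two hops of both endpoints and then eliminate each candidate, either because it equals $i$ (not corrupt by assumption) or because it lies within two hops of a leaf (not corrupt by C\ref{case:leaf})). This is precisely why the case split on whether the second neighbor $b$ is a leaf is needed — a pair at distance $3$ can have a non‑$i$ candidate bridge which only C\ref{case:leaf}) closes off, whereas the distance‑$4$ pair has none — and why both parts of Assumption~\ref{assume:Zloc} enter, with C\ref{case:corrupt}) responsible for the ``single interior vertex'' reduction in the first place.
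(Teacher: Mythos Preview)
Your proof is correct and mirrors the paper's: the ``if'' direction via Proposition~\ref{prop:nbru} is identical, and for the contrapositive both you and the paper produce a witness pair at $G^T$-distance $3$ or $4$ in $\mathcal N_u(i)$ by case-splitting on whether a chosen neighbor of $i$ is a leaf, then invoke Assumption~\ref{assume:Zloc} to rule out a corrupt intermediary. Your preliminary reduction of $A^U$ to the single-corrupt-bridge form (via C\ref{case:corrupt}) is a clean packaging that turns the exclusion step into a pure tree-distance check, but the underlying case analysis is the same as the paper's.
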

\begin{proof}
$(\Rightarrow)$ We will show that if $i$ is neither a leaf node nor a corrupt node, then $\nbr_u(i)\cup \{i\}$ does not form a clique in $G^U$. Note that all nodes in $1-hop (i)$ and all nodes in $2-hop (i)$ will be neighbors of $i$ in moral graph and hence are neighbors of $i$ in $G^U$. We will show that there exists a pair of nodes $a,b\in \nbr_u(i)$ such that $a-b$ does not hold true in $G^U$ and thus $\nbr(i)_u\cup \{i\}$ cannot form a clique in $G^U$.

By generative model description, we have that there is at least one 1-hop neighbor and one 2 -hop neighbor for every node in $G^T$. Consider $a \in 1-hop(i)$ and $b\in 2-hop(i)$. Then, a path $a-i-p-b$ exists in $G^T$ for some node $p\in V$. Either $a$ is a leaf node or not in $G^T$. Suppose, $a$ is a leaf node. As $G^T$ is a tree, the path $a-i-p-b$ is unique.
Thus, all possible paths between $a$ and $b$ in $G^M$ goes through at least one of $i,p$. As $a$ is a leaf node, by condition C~\ref{case:leaf}), $i,p$ are not corrupt. Thus, $a-b \notin A^U$.

Suppose $a$ is not a leaf node. Then, there exists a path $c-a-i-p-b$ in $G^T$. Thus, $c\in 2-hop(i)$ and a neighbor of $i$ in the perturbed graph, $G^U$. We show that $c,b$ despite being neighbors of $i$ in $G^U$, $c-b$ does not hold true in $G^U$. Similar to argument above, since $G^T$ is a tree, the path connecting $c,b$, $c-a-i-p-b$ is unique in $G^T$. Thus, all possible paths between $c$ and $b$ in $G^M$ goes through at least one of $a$, $p$ and $i$. By condition C\ref{case:corrupt}), both $a$ and $p$ cannot be corrupt. Thus all possible paths between $c$ and $b$ in $G^M$ goes through at least one non-perturbed node. Thus, $c-b\notin A^U$.        

$(\Leftarrow)$ Suppose $i$ is a leaf node in $G^T$ or a corrupt node. We will show that $\nbr_u (i)\cup \{i\}$ forms a clique in the perturbed kin graph, $G^U$. Using Proposition ~\ref{prop:nbru}, $\nbr_u(i)\cup\{i\}=1-hop(i)\cup 2-hop(i)\cup \{i\}$. 

\emph{$i$ is a leaf node:} There is only one non-leaf node $nl$ which is a neighbor of $i$ in $G^T$.  
Any pair of 2-hop neighbors of $i$ in $G^T$, $k_1,k_2$, has a common parent of $nl$ in the generative graph. Thus, $k_1-nl-k_2$ holds in $G^T$ and neighbors in $G^M$. Hence, $k_1-k_2$ holds in $G^U$. Thus, $\nbr_u(i)\cup \{i\}$ forms a clique in $G^U$. 

\emph{$i$ is a corrupt node:} For any $k_1$, $k_2\in \nbr _u(i)$, there is a path $k_1-i-k_2$ in the moral graph $G^M$. As $i$ is a corrupt node, $k_1-k_2$ holds in $G^U$. Thus, $\nbr_u(i)\cup \{i\}$ forms a clique in $G^U$.
%
\end{proof}
\subsection*{Detection of corrupt nodes}\label{sec:detectY}
After a candidate set containing corrupt nodes and leaf nodes are determined as discussed above, we will now isolate the corrupt nodes exactly. The following theorem precisely detects the corrupt nodes separately based on phase properties of entries in inverse PSD.    
\begin{theorem}\label{thm:phase}
\it{ Suppose $G ^T=(V,A^T)$ is the generative topology corresponding to a radial dynamical system. Let $Y=\{v_1,v_2,\dots ,v_n\}\subset V$ be set of corrupt nodes with each corruption satisfying ~\eqref{eq:spectrumConditions} and  Assumption ~\ref{assume:Zloc}. Suppose $G ^U=(V,A^U)$ is the perturbed graph inferred using theorem \ref{thm:multiperturbation}. Let $B$ be the set of nodes detected using Lemma \ref{lemma:leafCorrupt} whose neighborhood, $\nbr _u(i)$ forms a clique with $\{i\}$ in $G^U$. Take a node $i \in B.$  Then, $i$ has at least two neighbors, $p$, $q$ in $G^U$ with  non-constant $\angle \Phi _{uu}^{-1}(i,p)(\omega)$ for all $\omega \in (-\pi,\pi]$ if and only if $i$ is a corrupt node in $G ^T$.}
\end{theorem}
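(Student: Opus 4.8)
The plan is to write out the $i$-th row of $\Phi_{uu}^{-1}$ explicitly via the Woodbury recursion \eqref{eq:invPsiInduction}, treating separately the only two possibilities for $i\in B$: $i$ a leaf of $G^T$, or $i$ a corrupt node (these are exhaustive by Lemma \ref{lemma:leafCorrupt} and disjoint, since C\ref{case:leaf}) keeps corrupt nodes at least three hops from every leaf). Two structural facts do the work. First, in a radial system $\Phi_{xx}^{-1}(a,b)=0$ whenever $a,b$ are more than two hops apart in $G^T$, and, using $\trf_{ab}(z)=b_{ab}/S_a(z)$ together with $\Phi_{e_a}^{-1}(z)=S_a(z)S_a(z^{-1})\Phi_{w_a}^{-1}(z)$, the two-hop case of \eqref{eq:phiXinv} collapses to $\Phi_{xx}^{-1}(a,b)=b_{ka}b_{kb}\,\Phi_{w_k}^{-1}(z)$ for the common neighbor $k$; on the unit circle this is a real constant times the real positive quantity $\Phi_{w_k}^{-1}(\omega)$, hence has constant phase ($0$ or $\pi$). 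Second, $\Phi_{xx}^{-1}(i,i)$ and each $d_i(\omega)$ are real on the unit circle (the first from \eqref{eq:phiXinv} and the same $\trf/\Phi_e$ cancellation, the second from $\Phi_{u_iu_i}=|h_i|^2\Phi_{x_ix_i}+d_i$ with both power spectra real).

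\emph{Leaf case.} If $i$ is a leaf, C\ref{case:leaf}) places every corrupt node at least three hops from $i$, so $\Phi_{xx}^{-1}(i,v_m)=0$ for all corrupt $v_m$, hence $\Psi_0^{-1}(i,v_m)=0$ by \eqref{eq:psi0}. An induction on $\Psi_{k+1}^{-1}=\Psi_k^{-1}-\Gamma_{k+1}$ then shows each correction leaves the $i$-th row untouched, since $\Gamma_{k+1}(i,\cdot)$ is proportional to $\Psi_k^{-1}(i,v_{k+1})=\Psi_0^{-1}(i,v_{k+1})=0$; therefore $\Phi_{uu}^{-1}(i,\cdot)=\Psi_0^{-1}(i,\cdot)=\Phi_{xx}^{-1}(i,\cdot)$, the last equality using $h_p=1$ for the non-corrupt nodes $p$ within two hops of $i$, which by Proposition \ref{prop:nbru} is exactly $\nbr_u(i)$. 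By the first structural fact, $\Phi_{uu}^{-1}(i,k)$ has constant phase for every two-hop neighbor $k$ of $i$; since a leaf has a single one-hop neighbor, at most one neighbor of $i$ in $G^U$ can carry non-constant $\angle\Phi_{uu}^{-1}(i,\cdot)$, so the ``at least two'' condition fails.

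\emph{Corrupt case.} Let $i=v_m$. By C\ref{case:corrupt}) every other corrupt node is at least three hops from $i$, so again $\Phi_{xx}^{-1}(i,v_\ell)=0=\Psi_0^{-1}(i,v_\ell)$ for $\ell\ne m$; running the recursion shows that on the $i$-th row every correction vanishes except the one at step $m$, which acts as a scalar, yielding $\Phi_{uu}^{-1}(i,b)=\alpha(\omega)\,\Phi_{xx}^{-1}(i,b)/\overline{h_i}$ for all $b$, where $\alpha=d_i^{-1}/\bigl(d_i^{-1}+\Psi_0^{-1}(i,i)\bigr)$ and $\Psi_0^{-1}(i,i)=|h_i|^{-2}\Phi_{xx}^{-1}(i,i)$ (the corrections after step $m$ again annihilate the $i$-th row). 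Since $d_i$ and $\Phi_{xx}^{-1}(i,i)$ are real on the unit circle, $\alpha(\omega)$ is real and positive, so $\angle\Phi_{uu}^{-1}(i,b)=\angle h_i(\omega)+\angle\Phi_{xx}^{-1}(i,b)$. Now C\ref{case:leaf}) makes $i$ a non-leaf all of whose one-hop neighbors are themselves non-leaves, so $i$ has two distinct two-hop neighbors $k_1,k_2$ in $G^T$, both non-corrupt and both in $\nbr_u(i)$; by the first structural fact $\Phi_{xx}^{-1}(i,k_1)$ and $\Phi_{xx}^{-1}(i,k_2)$ have constant phase, whence $\angle\Phi_{uu}^{-1}(i,k_j)=\angle h_i(\omega)+\text{const}$ for $j=1,2$.

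It remains to argue that $\angle h_i(\omega)$ is non-constant for a genuinely corrupt node; this is the one place that needs the perturbation model and not merely \eqref{eq:spectrumConditions}. For each model of Section \ref{sec:Models} --- time-stamp jitter, non-trivial noisy filtering, packet drops --- $h_i$ is not a zero-phase transfer function (e.g.\ for the Bernoulli delay model $h_i(z)=p_iz^{t_1}+(1-p_i)z^{t_2}$ with $(t_1,t_2)\ne(0,0)$), so $\angle h_i$ is non-constant on $(-\pi,\pi]$, and we take this as the standing non-degeneracy assumption (in the spirit of Remark \ref{rem:otherdirectionMatSal}). Granting it, both $\angle\Phi_{uu}^{-1}(i,k_1)$ and $\angle\Phi_{uu}^{-1}(i,k_2)$ are non-constant, so a corrupt $i\in B$ has at least two such neighbors while a leaf $i\in B$ has at most one; combined with Lemma \ref{lemma:leafCorrupt} this gives the equivalence. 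I expect the main obstacles to be the careful bookkeeping of the Woodbury recursion on the $i$-th row --- showing every correction other than $\Gamma_m$ is identically zero there --- and stating precisely the non-degeneracy under which $\angle h_i$ is guaranteed non-constant.
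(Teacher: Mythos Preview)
Your argument tracks the paper's closely: the same Woodbury bookkeeping on the $i$-th row, the same induction showing that only the correction at the corrupt node $i$ itself survives, and the leaf case is essentially identical (both you and the paper reduce $\Phi_{uu}^{-1}(i,\cdot)$ to $\Phi_{xx}^{-1}(i,\cdot)$ and read off constant phase for the two-hop entries). The one substantive divergence is in the corrupt case: you exhibit the two non-constant-phase neighbors by taking \emph{two-hop} neighbors $k_1,k_2$ of $i$, for which $\Phi_{xx}^{-1}(i,k_j)$ is real, so that the non-constancy of $\angle\Phi_{uu}^{-1}(i,k_j)=\angle h_i+\text{const}$ must come entirely from $\angle h_i$. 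You rightly flag this as an extra hypothesis on the corruption model.

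The paper sidesteps that hypothesis by taking \emph{one-hop} neighbors $p$ of $i$ instead (a corrupt node has at least two of these, since C\ref{case:leaf}) forces it to be a non-leaf). Your own scalar formula $\Phi_{uu}^{-1}(i,p)=\alpha(\omega)\,\Phi_{xx}^{-1}(i,p)/\overline{h_i}$ with $\alpha$ real still applies, but now $\Phi_{xx}^{-1}(i,p)$ is the one-hop line of \eqref{eq:phiXinv}, which carries $S_i(\omega),S_p(\omega)$ and is generically complex with non-constant phase. Thus the non-constancy of $\angle\Phi_{uu}^{-1}(i,p)$ is inherited from the dynamical model --- the same genericity already invoked in Remark~\ref{rem:otherdirectionMatSal} --- rather than from any property of $h_i$. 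This matters because legitimate corruption models can produce a zero-phase $h_i$ (e.g.\ noisy filtering with a symmetric $L_i$, or the Bernoulli delay with $t_2=-t_1$ and $p_i=\tfrac12$); under those your two-hop witnesses would fail while the paper's one-hop witnesses still work. Replacing $k_1,k_2$ by one-hop neighbors in your corrupt case removes the extra assumption and brings the argument in line with the paper's.
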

\begin{proof}
 First, we recall the structure of $\Phi_{uu}^{-1}(\omega )$ described in \ref{subsec:corrupt_psd} in equations \eqref{eq:inductiveMatrices}-\eqref{eq:invPsiInduction}.
For each $i \in B$, we will inductively prove that for $k=2,\dots , n$, $\Psi ^{-1}_k(i,j)=\Psi ^{-1}_1(i,j)$. 
Consider case $k=2$. Using ~\eqref{eq:invPsiInduction},   
$\Psi ^{-1} _{2}(i,j)=\Psi ^{-1}_{1}(i,j)-\Gamma _{2}(i,j)$. Note that $\Gamma _2(i,j)=\Psi ^{-1}_1(i,v_2)\Psi ^{-1}_1(v_2,j)\Delta ^{-1}_{2}$. Similarly, using ~\eqref{eq:invPsiInduction}, $\Psi ^{-1} _{1}(i,v_2)=\Psi ^{-1}_{0}(i,v_2)-\Gamma _{1}(i,v_2)$ where $\Gamma_1(i,v_2):= \Psi^{-1} _{0}(i,v_1)\Psi
^{-1}_{0}(v_1,v_2)\Delta ^{-1}_{1}.$ Here, if $i$ is a corrupt node, $v_1 =i$.
 By ~\eqref{eq:psi0}, $\Psi ^{-1}_0(v_1,v_2)=\frac{\Phi ^{-1}_{xx}(v_1,v_2)}{\overline{h_{v_1}}h_{v_2}}$. 
As $v_1$ and $v_2$ are at least 3-hops away in $G^T$, using ~\eqref{eq:phiXinv}, $\Phi ^{-1}_{xx}(v_1,v_2)=0$. Thus, $\Gamma _1(i,v_2)=0.$ Again, by ~\eqref{eq:psi0} $\Psi _{0}^{-1}(i,v_2)=\frac{\Phi ^{-1}_{xx}(i,v_2)}{\overline{h_{i}}h_{v_2}}.$ Since $i$ is either a leaf node or a corrupt node, we have  that $i,v_2$ are at least 3 hops away from each other. Using this and ~\eqref{eq:phiXinv} we have that $\Phi ^{-1}_{xx}(i,v_2)=0$. This implies $\Psi _{0}^{-1}(i,v_2)=0$. Therefore, $\Psi _{1}^{-1}(i,v_2)=0$ and hence $\Gamma _2(i,j)=0$. Thus we have proved that $\Psi ^{-1} _{2}(i,j)=\Psi ^{-1}_{1}(i,j)$.  

Now assume that the claim holds for some $k>2$. That is, $\Psi ^{-1}_k(i,j)=\Psi ^{-1}_1(i,j).$ Using ~\eqref{eq:invPsiInduction},   
$\Psi ^{-1} _{k+1}(i,j)=\Psi ^{-1}_{k}(i,j)-\Gamma _{k+1}(i,j)$, where $\Gamma _{k+1}(i,j)=\Psi ^{-1}_{k}(i,v_{k+1})\Psi ^{-1}_k(v_{k+1},j)\Delta ^{-1}_{k+1}$. Using the induction hypothesis, $\Psi ^{-1}_k(i,v_{k+1})=\Psi ^{-1}_1(i,v_{k+1}).$ 
As $v_{1},v_{k+1}$ and $i,v_{k+1}$ are at least 3 hops away from each other respectively, using the same argument as described for $v_1,v_2$ and $i,v_2$ in the previous paragraph, we have $\Gamma_{k+1}(i,v_{k+1})=0$ and $\Psi ^{-1} _{k+1}(i,j)=\Psi ^{-1}_{k}(i,j)=\Psi ^{-1}_{1}(i,j)$. 
As $\Phi ^{-1}_{uu}=\Psi ^{-1}_{k}$ for $k = n$, we have established that $\Phi ^{-1}_{uu}(i,j)=\Psi ^{-1}_{1}(i,j)$. 

$(\Leftarrow)$ We will show that if $i\in B$ is a leaf node in $G^T$, then there is at most only one node $j\in \nbr _u(i)$ such that $\angle \Phi _{uu}^{-1}(i,j)(\omega)$ is non-constant for all $\omega \in (-\pi,\pi]$. By Proposition ~\ref{prop:nbru},  $\nbr_u(i)= 1-hop(i)\cup 2-hop(i)$. Moreover as $i$ is a leaf node, any node $j\in \nbr _u(i)$ is not a corrupt node. Therefore, using ~\eqref{eq:psi0} and preceding discussion, we have 
\begin{equation}\label{eq:leafUU}
    \Phi ^{-1}_{uu}(i,j) =  \Psi ^{-1} _{1}(i,j)=\Psi ^{-1}_{0}(i,j) =\Phi _{xx}^{-1}(i,j).
\end{equation}

Since $i$ is a leaf node, there is only node in $1-hop(i)$. Suppose $r$ is that node. We will show that for all $j\ne r\in \nbr _u(i)$ (this means $j\in 2-hop(i)$), $\angle \Phi ^{-1}_{uu}(i,j)=0$ while $\angle \Phi ^{-1}_{uu}(i,r)$ is non-constant.

Take any $j\in 2-hop(i)$. Let $q\in V$ be the common neighbor of $i$ and $j$ in $G^T$. Combining ~\eqref{eq:leafUU} and \eqref{eq:phiXinv} we have:
 \begin{align}
 \Phi _{uu}^{-1}(i,j)&= \overline{\mathcal{G}}_{qi}(\omega)\mathcal{G}_{qj}(\omega)\Phi ^{-1}_{e_q}\\
 &=\frac{b_{qi}b_{qj}\Phi ^{-1}_{w_q}}{|S_{q}|^4},     
 \end{align} which is a positive real scalar. Thus, $\angle \Phi _{uu}^{-1}(i,j)=0$. 
 
 Now, consider the node $r$.  By ~\eqref{eq:leafUU} and ~\eqref{eq:phiXinv} we have:
   \begin{equation}
       \Phi ^{-1}_{uu}(i,r)=-\frac{b_{ir}\Phi ^{-1}_{w_{i}}}{S_{i}(\omega)|S_{i}(\omega)|^2}-\frac{b_{ri}\Phi ^{-1}_{w_{r}}}{S_{r}(\omega)|S_{r}|^2}
   \end{equation}
Thus $\Phi ^{-1}_{uu}(i,r)$ has a non-constant phase response.

$(\Rightarrow)$ We will show that if $i\in B$ is a corrupt node in $G^T$, then there are at least two neighbors $p,r$ of $i$ in $G^U$ such that $\Phi ^{-1}_{uu}(i,p) $, $\Phi ^{-1}_{uu}(i,r)$ are non-constant transfer functions. By assumption on location of corrupt nodes, every corrupt node has at least two 1 hop neighbors in $G^T$. Take any $p\in 1-hop(i).$ We will now show that $\Phi _{uu}^{-1}(i,p)$ is not a constant transfer function.

Now, $\Phi ^{-1} _{uu}(i,p)=\Psi ^{-1}_{0}(i,p)-\Psi^{-1} _{0}(i,i)\Psi
^{-1}_{0}(i,p)\Delta ^{-1}_{1}.$ By ~\eqref{eq:psi0}, $\Psi ^{-1}_0(i,p)=\frac{\Phi ^{-1}_{xx}(i,p)}{\overline{h_i}h_{p}}$. Then,
\begin{equation}\label{eq:expandPhiuu}
\Phi ^{-1} _{uu}(i,p)= \frac{\Phi _{xx}^{-1}(i,p)}{\overline{h_i}}-\frac{\Delta _1 ^{-1}\Phi ^{-1}_{xx}(i,i)\Phi ^{-1}_{xx}(i,p)}{\overline{h_i}|h_{i}|^2}   
\end{equation}
 Using ~\eqref{eq:phiXinv} we have:
   \begin{equation}\label{eq:expandphiuu2}
       \Phi ^{-1}_{xx}(i,p)=-\frac{b_{ip}\Phi ^{-1}_{w_{i}}}{S_{i}(\omega)|S_{i}(\omega)|^2}-\frac{b_{pi}\Phi ^{-1}_{w_{p}}}{S_{p}(\omega)|S_{p}|^2}
   \end{equation}
   Therefore, by ~\eqref{eq:expandPhiuu} and ~\eqref{eq:expandphiuu2} we can see that $\Phi _{uu}^{-1}(i,p)$ is not a constant transfer function and hence has non-constant phase response. 
\end{proof}

The above result detects the set of corrupt nodes, $y$, from the candidate set, $B$, and hence the remaining nodes $B\setminus y$ are the leaf nodes. Moreover, the above result delineates that only leaf nodes have one unique entry in $\Phi ^{-1}_{uu}$ with a non-constant phase response. This corresponds to the true edge associated with the leaf node. Thus, the above also provides a method to detect leaf nodes and remove spurious edges associated with leaf nodes. The procedure is described comprehensively in Algorithm ~\ref{alg:detectCorrupt}.
\begin{algorithm}
	\caption{\label{alg:detectCorrupt} Detection of Corrupt Nodes and Isolating True Edges Associated with Leaf Nodes}
	\begin{algorithmic}[1]
		\Statex \textbf{Input:} Time series measurements, $u$. 
		\Statex \textbf{Output:} Set of perturbed nodes, $Y$, set of leaf nodes, $L$, and set of true edges, $\mathcal{E}_L$ associated with leaf nodes. 
		\Statex \textbf{Init:} $A_Z\gets \{ \}$, $Y\gets \{ \}$, $L\gets \{ \}$, $\mathcal{E}_L \gets \{ \}$.
		\State Compute inverse PSD, $\Phi _{uu}^{-1}$.		
		\ForAll{$i \in V$, $i\neq j$}
			\If{$\Phi_{uu}^{-1}(i,j)(\omega )\ne 0$}
				\State $A_Z\gets A_Z \cup \{i-j\}$
			\EndIf
		\EndFor
		\ForAll{$i\in V$}
		\State $\epsilon _i \gets \{ \}$.
			\If{$\nbr _u(i)\cup \{i\}$ forms a clique in $G_Z=(V,A_Z)$}
              \ForAll{$j\in \nbr _u(i)$}
				    \If{ $\angle \Phi_{uu}^{-1}(i,j)(\omega )$ is not constant for all $\omega \in (-\pi,\pi]$}
					    \State $\epsilon _i \gets \epsilon _i \cup\{ i-j \}$
				    \EndIf
              \EndFor
          \If{Cardinality of $\epsilon _i\ge 2$}
              \State $Y\gets Y\cup \{i\}$
          \Else{$ L\gets L \cup \{i\}$ and $\mathcal{E}_{L}\gets \mathcal{E}_{L}\cup \epsilon _i$}
          \EndIf
		  \EndIf
	\EndFor
	\end{algorithmic}
\end{algorithm}
\subsection{Hide and Learn Algorithm}\label{sec:exactLearn}
\begin{algorithm}
	\caption{\label{alg:hidelearn} Exact Topology Learning: \emph{Hide and Learn}}
	\begin{algorithmic}[1]
		\Statex \textbf{Input:} Inputs and outputs from Algorithm ~\ref{alg:detectCorrupt}.
		\Statex \textbf{Output:} Set of true edges, $A$, in generative topology $G^T$.
		\Statex \textbf{Init:} Set of observed edges, $A_o\gets \{ \}$.
		\State Isolate non-corrupt measurements, $o=u\setminus y$. Observed nodes $V_o=V\setminus Y$. 
		\State Using measurements $o$, compute inverse PSD, $\Phi _{oo}^{-1}$.		
		\ForAll{$i \in V_o$, $i\neq j$}
			\If{$\Phi_{oo}^{-1}(i,j)(\omega )\ne 0$}
				\State $A_o\gets A_o \cup \{i-j\}$
			\EndIf
		\EndFor	 
		\State Non-leaf nodes, $V_{nl}=V\setminus \{Y\cup L\}$.
		\State True edge set, $\mathcal{E}_T \gets \mathcal{E}_L$. 
		\ForAll{$p,q \in V_{nl}$ such that $p-q \in A_o$}
			\If{There exist $K \ne \{\}$ and $S\ne \{\}$ such that $sep(K,S|\{p,q\})$ holds}
				\State $\mathcal{E}_T \gets \mathcal{E}_T \cup \{p-q\}$
				\State $A\gets A\cup \mathcal{E}_T$
			\EndIf
		\EndFor
		\State $d \gets $ number of disconnected components in the graph, $\Theta =(V_o,\mathcal{E}_T)$. (i.e $\Theta =\cup _{i=1}^{d}\theta _i$).
		\ForAll {$i\in \{1,2,\dots,d\}$}
			\ForAll{$j\in \{i+1,\dots,d\}$}
				\If{There exists nodes $q\in \theta _i$ and $r\in \theta _j$ such that $p-q \in \theta _i$ and $s-r \in \theta _j$ holds for some other observed nodes $p,s$}
					\ForAll{ $l\in Y$}
					\If{ $\{p,q,l,r,s\}$ forms a clique in $G_Z$}
						\If{$\Phi _{uu}^{-1}(p,s)(\omega)$ is constant for all $\omega \in (-\pi,\pi]$}
							\State $A \gets A \cup \{q-l,l-r\}$
						\EndIf
					\EndIf
					\EndFor
				\EndIf
			\EndFor
		\EndFor
\end{algorithmic}
\end{algorithm} 

The steps to recover the exact topology of the radial linear dynamical system using imperfect information are presented in this section. To accomplish this we follow \textit{hide and learn} strategy. This is described  in Algorithm ~\ref{alg:hidelearn}. First, hide the measurements of the corrupt nodes. We  infer the graphical structure of the network by observing sparsity pattern of inverse PSD using only the nodes that has perfect information by marginalizing out the corrupt node measurements. That is, the corrupt nodes will be treated as latent nodes. This graph will contain spurious edges. This constitutes lines 1 to 7 in Algorithm ~\ref{alg:hidelearn}. Second, identify the true edges in the graph obtained from previous step. This constitutes lines 8 to 15 in Algorithm ~\ref{alg:hidelearn}. Finally, place the corrupt nodes back at the correct location in the structure resulting from previous step as described in lines 16 to 29 in Algorithm ~\ref{alg:hidelearn}. 
Theorem ~\ref{thm:mainalg} is the main result of the article which states that Algorithm ~\ref{alg:hidelearn} precisely learns the exact topology of a radial system with imperfect information once the corrupt nodes have been detected using Algorithm ~\ref{alg:detectCorrupt}. The proof is given in the appendix. 
\begin{theorem}\label{thm:mainalg}
Suppose $Y$ is the set of perturbed nodes, $L$ is the set of leaf nodes and $\mathcal{E}_L$ is the set of true edges associated with leaf nodes detected from Algorithm ~\ref{alg:detectCorrupt}. Then, Algorithm ~\ref{alg:hidelearn} results in learning the true generative topology is $G^T=(V,A)$ for the corresponding radial system.
\end{theorem}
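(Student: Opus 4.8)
The plan is to verify that each of the three stages of Algorithm \ref{alg:hidelearn} does what its accompanying prose claims, and that the union of what the stages output is exactly $A$. I would organize the argument around a structural decomposition of $G^T$: deleting the corrupt node set $Y$ from the tree $G^T$ splits it into a collection of subtrees (connected components $\theta_1,\dots,\theta_d$ of the induced subgraph on $V_o=V\setminus Y$), and because of Assumption \ref{assume:Zloc} (corrupt nodes at least $3$ hops apart and at least $3$ hops from leaves), each corrupt node $l\in Y$ is an interior node whose removal detaches at least two such subtrees, and every edge of $G^T$ is either an edge internal to some $\theta_i$, or an edge incident to a corrupt node. So $A = A_{\mathrm{int}} \sqcup A_{\mathrm{inc}}$, and I must show Stage 2 (lines 8--15) recovers exactly $A_{\mathrm{int}}$ together with $\mathcal E_L$, and Stage 3 (lines 16--29) recovers exactly $A_{\mathrm{inc}}$.

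First I would analyze Stage 1 (lines 1--7): marginalizing out $Y$ and reading off the sparsity of $\Phi_{oo}^{-1}$. Treating $Y$ as latent, the standard result (the same computation as Theorem \ref{thm:multiperturbation}, since latency behaves like a limiting corruption) gives that $A_o$ equals the moral graph $G^M$ restricted to $V_o$ with extra edges added between any two observed nodes joined by a path through $Y$ only. Using that corrupt nodes are $\ge 3$ hops apart, such a path passes through exactly one corrupt node, so $A_o$ on $V_o$ consists of: all true edges between non-leaf observed nodes, all $2$-hop "moral" edges among observed nodes, and, for each $l\in Y$, a clique on the observed neighbors $\mathcal N(l)$ (its true $1$-hop neighbors in $G^T$). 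Next, Stage 2: for a pair $p,q\in V_{nl}$ with $p-q\in A_o$, the test is whether there exist nonempty $K,S$ with $sep(K,S\mid\{p,q\})$ in the $\Phi_{oo}^{-1}$ graph. I would invoke the graphical-separation characterization for radial systems from \cite{talukdar2017Exact}: a moral edge $p-q$ (a $2$-hop pair) always has their common neighbor on every path, so $\{p,q\}$ does not separate anything, whereas a true tree edge $p-q$ between two non-leaf nodes does separate the two sides of the tree — and crucially, adding the corrupt-node cliques does not create a bypass, because each corrupt node sits on one side of the edge $p-q$ (this is where Assumption \ref{assume:Zloc} is used again). Hence the test accepts exactly the true edges among non-leaf observed nodes; together with $\mathcal E_L$ from Algorithm \ref{alg:detectCorrupt} (whose correctness is Theorem \ref{thm:phase}) this yields $A_{\mathrm{int}}\cup\mathcal E_L$ — equivalently, $A$ minus the edges incident to $Y$.

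Finally, Stage 3. After Stage 2 the graph $\Theta=(V_o,\mathcal E_T)$ is exactly $G^T$ with the corrupt nodes (and their incident edges) deleted, so its components are the subtrees $\theta_1,\dots,\theta_d$. For each pair $\theta_i,\theta_j$, one looks for $q\in\theta_i$, $r\in\theta_j$ each having a further neighbor ($p$ resp.\ $s$) inside its component, then searches over $l\in Y$ for which $\{p,q,l,r,s\}$ is a clique in $G_Z$ and tests whether $\angle\Phi_{uu}^{-1}(p,s)$ is constant. I would show: if $l\in Y$ genuinely joins $q\in\theta_i$ to $r\in\theta_j$ in $G^T$ (i.e.\ $q-l-r$ is a path), then $p-q-l-r-s$ is a path in $G^T$, so $p,s$ are $4$ hops apart through the single corrupt node $l$; the derivation in the proof of Theorem \ref{thm:phase} (the telescoping $\Psi_k^{-1}=\Psi_1^{-1}$ argument, plus the $2$-hop entry of \eqref{eq:phiXinv}) then shows $\Phi_{uu}^{-1}(p,s)$ equals a product $\overline{\mathcal G}\,\mathcal G\,\Phi_e^{-1}$-type term that — because the path passes through the \emph{corrupted} node — carries the extra $\Delta_1^{-1}$ factor and is \emph{not} a positive real scalar, so its phase is non-constant; conversely, if no corrupt node lies on the $p$-to-$s$ path, then either $p,s$ are not both $G_Z$-neighbors of a common $l$ (clique test fails) or the path is purely through non-corrupt nodes and $\Phi_{uu}^{-1}(p,s)$ is the constant-phase $2$-hop expression, so the test correctly rejects. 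Wait — I should double-check the direction of the phase test in line 24: the algorithm adds $q-l,\,l-r$ when $\Phi_{uu}^{-1}(p,s)$ \emph{is constant}, so I will need to reconcile this with the non-constant-phase conclusion above, most likely by noting that the relevant four-hop entry through a corrupted interior node in fact reduces (after the $\Psi_1^{-1}=\Psi_0^{-1}$ collapse away from $l$) to a real scalar of the form $b\,b\,\Phi_w^{-1}/|S|^{\bullet}$ exactly when $l$ is on the path, which makes the constancy test the right discriminator; pinning down this sign is the one genuinely delicate computation. Granting it, every true edge $q-l$ incident to a corrupt node is discovered (via the pair of components it separates) and no spurious one is added, so $A_{\mathrm{inc}}$ is recovered exactly, and $A = A_{\mathrm{int}}\cup\mathcal E_L\cup A_{\mathrm{inc}}$ is the final output.

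The main obstacle I anticipate is Stage 3: both the bookkeeping (showing every corrupt-incident edge is detected by \emph{some} admissible choice of witness nodes $p,q,r,s$, which requires each $\theta_i$ adjacent to $l$ to contain a node at distance $\ge 2$ from $l$ — again guaranteed by the $3$-hops-from-leaves clause) and, more sharply, getting the phase/constancy dichotomy for the four-hop entry $\Phi_{uu}^{-1}(p,s)$ to come out in the direction the algorithm assumes. The Stage 1 and Stage 2 parts are comparatively routine given \cite{talukdar2017Exact} and Theorems \ref{thm:multiperturbation}--\ref{thm:phase}.
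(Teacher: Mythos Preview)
Your three-stage decomposition matches the paper's appendix proof exactly, and your handling of Stages~1 and~2 (which the paper delegates to Lemmas~\ref{lemma:latentPSD} and~\ref{lemma:separation} cited from prior work) is correct in outline.

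The genuine gap is in Stage~3, and although you flag it, you misdiagnose what the phase test discriminates. You frame the question as whether $l$ ``lies on the $p$-to-$s$ path''. But once the clique condition $\{p,q,l,r,s\}$ in $G^U$ holds, Proposition~\ref{prop:nbru} forces all of $p,q,r,s$ to be within two hops of $l$ in $G^T$; combined with the true edges $p-q\in\theta_i$ and $r-s\in\theta_j$ (distinct components), this already pins $l$ between the two components, so $l$ is \emph{always} on the $p$--$s$ path and your proposed ``converse'' case (no corrupt node on the path, or $p,s$ as ordinary $2$-hop neighbors) cannot occur. What the constancy test actually discriminates is the \emph{alignment}: among the four configurations $p{-}q{-}l{-}r{-}s$, $q{-}p{-}l{-}r{-}s$, $p{-}q{-}l{-}s{-}r$, $q{-}p{-}l{-}s{-}r$, only in the first are both $p$ and $s$ at distance exactly~$2$ from $l$; in the others at least one of them is a $1$-hop neighbor of $l$. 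The paper isolates this as Proposition~\ref{prop:relaxeddistphase}: after an inductive reduction (Proposition~\ref{prop:localPsi}) showing the other corrupt nodes do not contribute, one has $\Phi_{uu}^{-1}(a,b)=\Psi_0^{-1}(a,b)-\Psi_0^{-1}(a,l)\,\Psi_0^{-1}(l,b)\,\Delta_l^{-1}$. Here $\Delta_l^{-1}$ is real, and by~\eqref{eq:phiXinv} the factor $\Psi_0^{-1}(a,l)=\Phi_{xx}^{-1}(a,l)$ is real when $a$ is a $2$-hop neighbor of $l$ but has nonconstant phase when $a$ is a $1$-hop neighbor. Since the direct term $\Psi_0^{-1}(p,s)$ vanishes ($p,s$ four hops apart), $\Phi_{uu}^{-1}(p,s)$ is real precisely in the alignment $p{-}q{-}l{-}r{-}s$, and only then does the algorithm (correctly) add $q{-}l$ and $l{-}r$. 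Your initial guess that the four-hop entry has \emph{non}-constant phase when $l$ is on the path is therefore backwards, and your attempted repair (``exactly when $l$ is on the path'') does not separate the cases.

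A smaller point: the telescoping $\Psi_k^{-1}=\Psi_1^{-1}$ you borrow from the proof of Theorem~\ref{thm:phase} was established there only for rows indexed by $i\in B$ (leaf or corrupt). Here $p,q,r,s$ need be neither, so the paper redoes the induction (Proposition~\ref{prop:localPsi}) using a different mechanism: for $k>1$, the corrupt node $v_k$ cannot be simultaneously within two hops of both $a\in\{p,q\}$ and $b\in\{r,s\}$, since that would create a second $a$--$b$ path in the tree. You should not assume the Theorem~\ref{thm:phase} argument transfers verbatim.
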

\begin{figure}[t]
 \centering
 \includegraphics[height=0.7\columnwidth,width=   0.9\columnwidth]{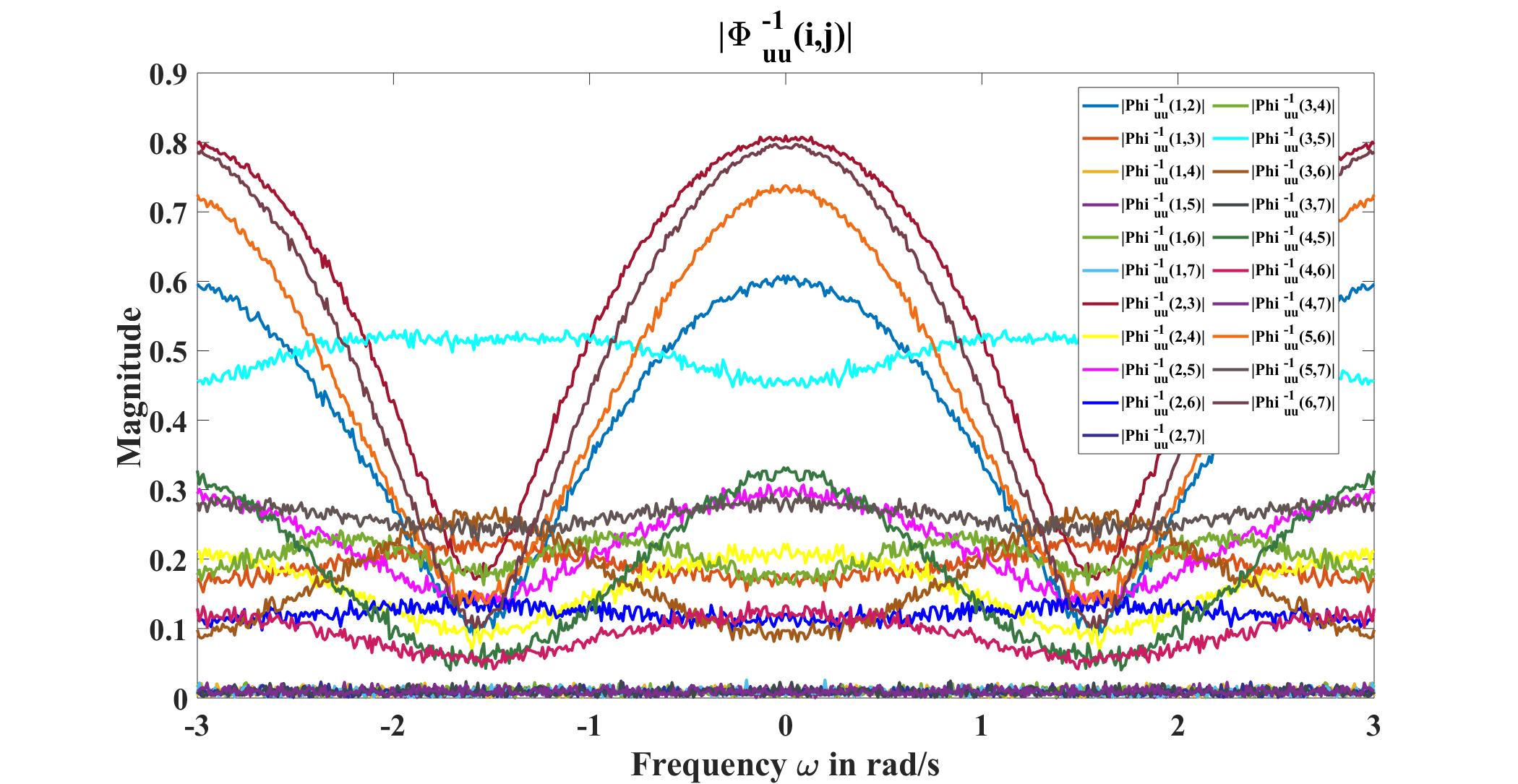}
 \caption{ \label{fig:magu} {\textbf{Magnitude Plots}} The magnitude of inverse power spectral density estimates computed from corrupt data streams $u$ are shown in the here. Notice the entries are non-zero across the frequency grid. To each non-zero entry, we add undirected edges to infer the perturbed graph as shown in Figure ~\ref{fig:perturb 4}) following theorem ~\ref{thm:multiperturbation}.}  
\end{figure}
\section{Simulation Result}\label{sec:sim}
In this section we demonstrate the topological learning algorithm via a numerical example. Let the true generative graph, $G$, be as shown in Fig.~\ref{fig:simple Chain}) with the following dynamics: 
  \begin{equation}\label{eq:eg}
\begin{aligned} 
x_1[t]&= 0.5 x_2[t-1] + e_1[t]\\
x_2[t]&= 0.36 x_1[t-1] +0.6 x_3[t-1]+ e_2[t]\\
x_3[t]&= 0.95 x_2[t-1] -1.7 x_4[t-1]+ e_3[t]\\
x_4[t]&= 0.51 x_3[t-1] +0.55 x_5[t-1]+ e_4[t]\\
x_5[t]&= 1.5 x_4[t-1] +0.6 x_6[t-1] + e_5[t]\\
x_6[t]&= 0.7 x_5[t-1] +0.5 x_7[t-1] + e_6[t]\\
x_7[t]&= 0.65 x_6[t-1]+  e_7[t]
\end{aligned}
\end{equation}
where $e_i$ are white noise sequences. The corruption model for node 4 is:
\begin{equation}
u_4[t] = \begin{cases}
  x_4[t-2], & \textrm{ with probability } 0.7 \\
  x_4[t], & \textrm{ with probability } 0.3.
\end{cases}
\end{equation} 
\begin{figure}
 \centering
 \includegraphics[height=0.7\columnwidth,width=0.9\columnwidth]{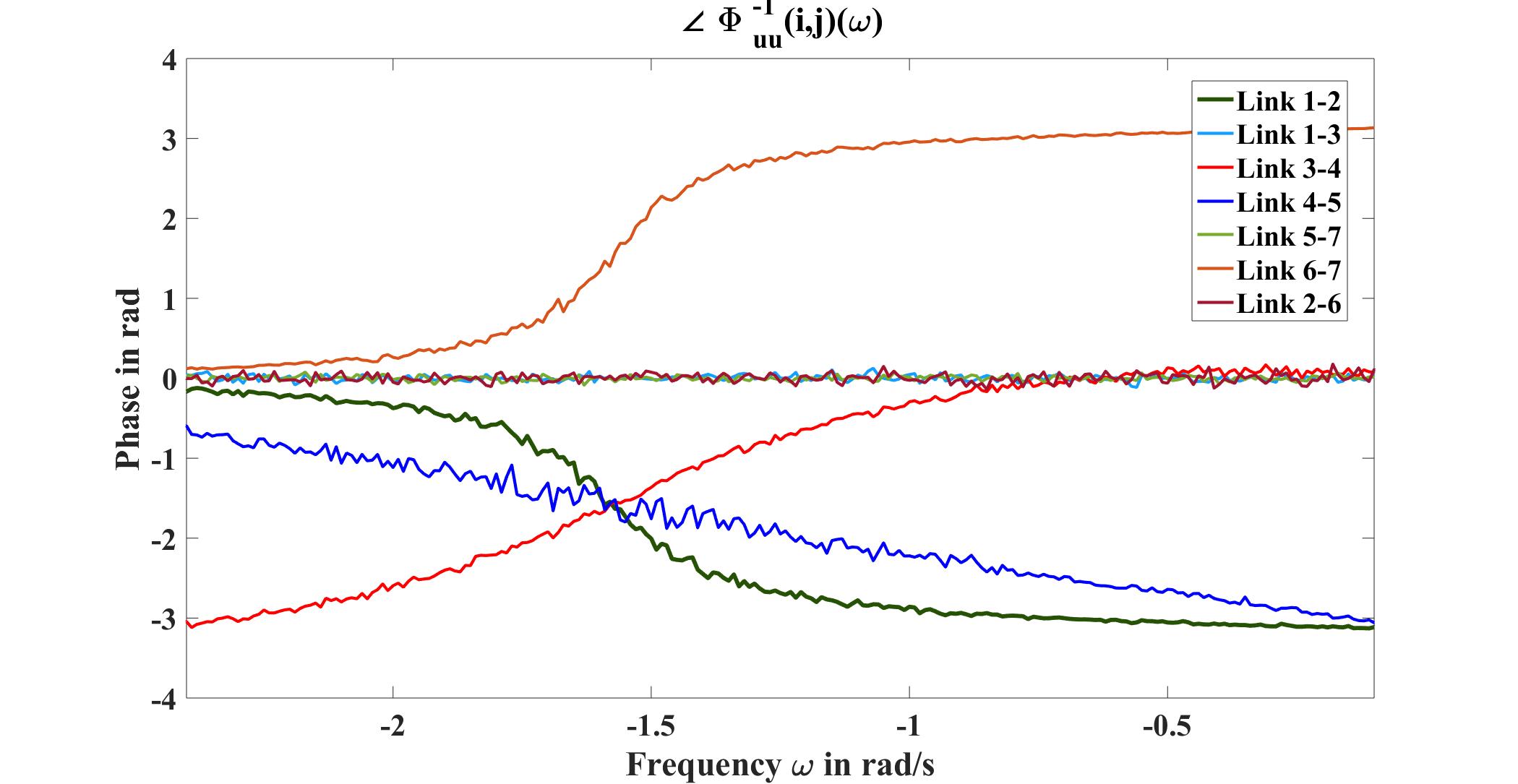}
 \caption{ \label{fig:phase} {\textbf{Phase Plots}.} The estimated phase response values are shown in the figure. We observe that the phase response corresponding to edges to the leaf nodes, $\{1,7\}$, have only non-constant phase response. Node $4$ has two non-constant phase response. This verifies predictions of Theorem ~\ref{thm:phase}. The phase response estimate of $2-6$ link is approximately close to zero and is a constant. This verifies Lemma ~\ref{lemma:placecorrupt}. } 
\end{figure}
From a trajectory length of $10^7$, the estimates for power spectral density was obtained using MATLAB 'cpsd' command. The plot for
magnitude of the inverse power spectral density estimates is shown in Figure ~\ref{fig:magu}. \emph{Step 1:} Using Theorem ~\ref{thm:multiperturbation}, adding edges and constructing an undirected graph results in the perturbed graph shown in figure ~\ref{fig:perturb 4}). \emph{Step 2:} We notice that neighbors of 1,4 and 7 forms a clique with nodes 1,4 and 7 respectively. As predicted by Lemma ~\ref{lemma:leafCorrupt}, we have identified the candidate set. \emph{Step 3:} The next step is to detect the corrupt node. To this we observe the phase response of the inverse PSD estimated. Figure ~\ref{fig:phase} shows that only $4$ will have at least two non-constant phase estimates. For leaf nodes, there will only be one non-constant phase plot. Using Theorem ~\ref{thm:phase} we determine node 4 as the corrupt node. 
\emph{Step 5:} The next step is to follow the \textit{hide and learn} algorithm. We first remove the measurements of node $4$ and infer the topology of the network with latent node 4. That is, using the measurements $O=\{1,2,3,5,6,7\}$, we compute the inverse PSD. The magnitude of $\Phi ^{-1}_{oo}$ is shown in Figure ~\ref{fig:latentplot}. Following Lemma ~\ref{lemma:latentPSD} yields the undirected graph shown in figure ~\ref{fig:latent4}.
\begin{figure}
 \centering
 \includegraphics[height=0.7\columnwidth,width=0.9\columnwidth]{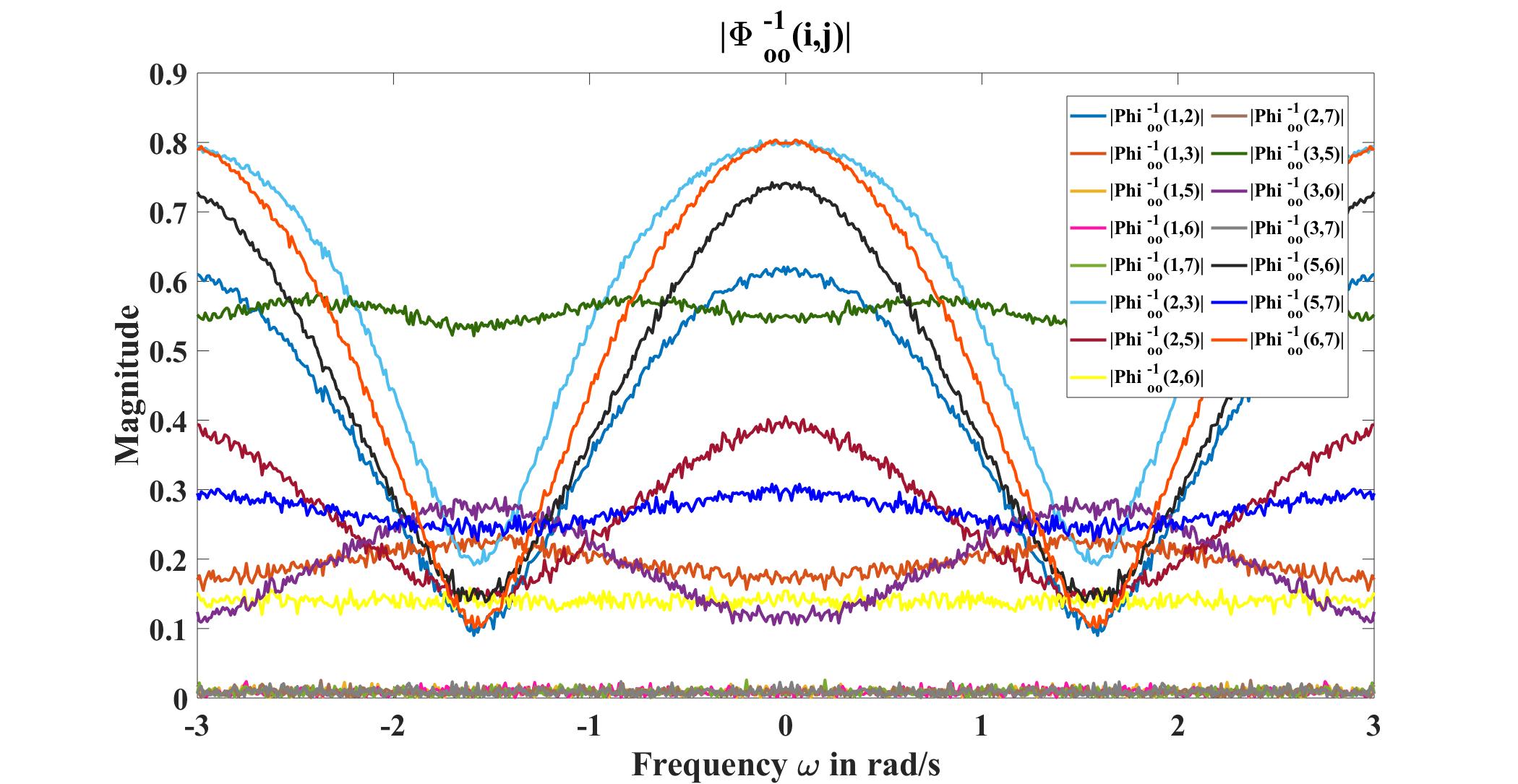}
 \caption{ \label{fig:latentplot} {\textbf{Unobserved node $4$}.} The magnitude of the inverse PSD estimates computed from $o=u\setminus \{4\}$ are shown here. $y$ axis is angular frequency $\omega$ in radians/s. Notice the entries are non-zero across the frequency grid. To each non-zero entry, we add undirected edges to infer the undirected graph as shown in Figure ~\ref{fig:latent4}) following Lemma ~\ref{lemma:latentPSD}. }
\end{figure}
\begin{figure}[t]
  \centering
  \begin{subfigure}{0.7\columnwidth}
        \centering
      \begin{tikzpicture}
                \tikzstyle{vertex}=[circle,fill=none,minimum size=11pt,inner sep=0pt,thick,draw]
        \tikzstyle{pvertex}=[color=red,circle,fill=white,minimum size=11pt,inner sep=0pt,thick,draw]
          \node[vertex] (n1) {$1$};
          \node[vertex, right of=n1] (n2) {$2$};
          \node[vertex,right of=n2] (n3) {$3$};
          \node[vertex,right of=n3] (n5) {$5$};
          \node[vertex,right of=n5] (n6) {$6$};
          \node[vertex,right of=n6] (n7) {$7$};

          \draw[thick] (n1)--(n2);
          \draw[thick] (n2)--(n3);
          \draw[thick] (n3)--(n5);
          \draw[thick] (n5)--(n6);
          \draw[thick] (n6)--(n7);
          \draw[thick,dashed] (n1) to[out=40,in=140] (n3);
          \draw[thick,dashed] (n5) to[out=40,in=140] (n7);
          \draw[red, thick,dashed] (n2) to[out=40,in=140] (n5);
          \draw[red,thick,dashed] (n2) to[out=40,in=140] (n6);
          \draw[red, thick,dashed] (n3) to[out=300,in=240] (n6);
        \end{tikzpicture}
                \subcaption{
        \label{fig:latent4}Inferred undirected graph with latent node 4.
        }
          \end{subfigure}
    \begin{subfigure}{0.7\columnwidth}
    \centering
            \begin{tikzpicture}
                \tikzstyle{vertex}=[circle,fill=none,minimum size=11pt,inner sep=0pt,thick,draw]
        \tikzstyle{pvertex}=[color=red,circle,fill=white,minimum size=11pt,inner sep=0pt,thick,draw]
          \node[vertex] (n1) {$1$};
          \node[vertex, right of=n1] (n2) {$2$};
          \node[vertex,right of=n2] (n3) {$3$};
          \node[vertex,right of=n3] (n5) {$5$};
          \node[vertex,right of=n5] (n6) {$6$};
          \node[vertex,right of=n6] (n7) {$7$};

          \draw[thick] (n1)--(n2);
          \draw[thick] (n2)--(n3);
          \draw[thick] (n5)--(n6);
          \draw[thick] (n6)--(n7);
          
        \end{tikzpicture}
        \subcaption{
        \label{fig:true edges} Detect true edges.
        }
  \end{subfigure}
  \begin{subfigure}{0.7\columnwidth}
    \centering
  \begin{tikzpicture}
                \tikzstyle{vertex}=[circle,fill=none,minimum size=11pt,inner sep=0pt,thick,draw]
        \tikzstyle{pvertex}=[color=red,circle,fill=white,minimum size=11pt,inner sep=0pt,thick,draw]
          \node[vertex] (n1) {$1$};
          \node[vertex, right of=n1] (n2) {$2$};
          \node[vertex,right of=n2] (n3) {$3$};
          \node[pvertex,right of=n3] (n4) {$4$};
          \node[vertex,right of=n4] (n5) {$5$};
          \node[vertex,right of=n5] (n6) {$6$};
          \node[vertex,right of=n6] (n7) {$7$};

          \draw[thick] (n1)--(n2);
          \draw[thick] (n2)--(n3);
          \draw[blue,thick] (n3)--(n4);
          \draw[blue,thick] (n4)--(n5);
          \draw[thick] (n5)--(n6);
          \draw[thick] (n6)--(n7);
          
        \end{tikzpicture}
                \subcaption{
        \label{fig:trueTopch}Place hidden (corrupt) node 4.
        }
        \end{subfigure}
    \caption{
    \label{fig:hide and infer} This figure shows how Hide and Learn algorithm learns the exact topology of the generative system considered in the example. 
  }
\end{figure}
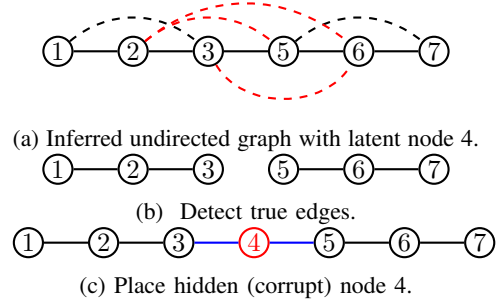

Then, using graphical separation results mentioned in Lemma ~\ref{lemma:separation}, we detect the true edges in the inferred network. This yields two disconnected components as shown in Figure ~\ref{fig:true edges}. Finally, we place the latent node at the point of disconnection and obtain the true generative topology shown in Figure ~\ref{fig:trueTopch}.
\section{Conclusion}\label{sec:conclude}
In this article, we proposed an exact topology learning algorithm for radial bi-directed network of LTI systems in the presence of corruption. We show that for networks where corrupt nodes are three or more hops away from each other deep inside the network, the spurious edges owning to data corruption can be eliminated and the the exact network structure can be determined. We used clique characterization in the inferred undirected graph to determine the set of leaf and corrupt nodes. Then using phase properties of the inverse PSD, we isolated the location of corrupt nodes. Finally,  we hide the corrupt node measurements and adopt \textit{hide and learn} strategy to learn the exact network representation generating the time series observations. We remark here that Algorithm ~\ref{alg:detectCorrupt} and ~\ref{alg:hidelearn} will still work to learn the exact network structure even when there are hidden nodes and corruption simultaneously as long as the location of hidden nodes and the corrupt nodes are at least 3 hops away from each other and at least 3 hops away from the leaf nodes. The future direction of research entails relaxing the assumption on generative topology being a tree. Another pertinent direction would be to quantify the amount of data and provide confidence intervals in estimating PSD from finite samples of data. 

\bibliographystyle{IEEEtran}
\bibliography{ref}

\begin{thebibliography}{10}
\providecommand{\url}[1]{#1}
\csname url@samestyle\endcsname
\providecommand{\newblock}{\relax}
\providecommand{\bibinfo}[2]{#2}
\providecommand{\BIBentrySTDinterwordspacing}{\spaceskip=0pt\relax}
\providecommand{\BIBentryALTinterwordstretchfactor}{4}
\providecommand{\BIBentryALTinterwordspacing}{\spaceskip=\fontdimen2\font plus
\BIBentryALTinterwordstretchfactor\fontdimen3\font minus
  \fontdimen4\font\relax}
\providecommand{\BIBforeignlanguage}[2]{{%
\expandafter\ifx\csname l@#1\endcsname\relax
\typeout{** WARNING: IEEEtran.bst: No hyphenation pattern has been}%
\typeout{** loaded for the language `#1'. Using the pattern for}%
\typeout{** the default language instead.}%
\else
\language=\csname l@#1\endcsname
\fi
#2}}
\providecommand{\BIBdecl}{\relax}
\BIBdecl

\bibitem{bassett2017network}
D.~S. Bassett and O.~Sporns, ``Network neuroscience,'' \emph{Nature
  neuroscience}, vol.~20, no.~3, p. 353, 2017.

\bibitem{scott1988social}
J.~Scott, ``Social network analysis,'' \emph{Sociology}, vol.~22, no.~1, pp.
  109--127, 1988.

\bibitem{deka2018structure}
D.~Deka, S.~Backhaus, and M.~Chertkov, ``Structure learning in power
  distribution networks,'' \emph{IEEE Transactions on Control of Network
  Systems}, vol.~5, no.~3, pp. 1061--1074, Sept 2018.

\bibitem{jordan1998learning}
M.~I. Jordan, \emph{Learning in graphical models}.\hskip 1em plus 0.5em minus
  0.4em\relax Springer Science \& Business Media, 1998, vol.~89.

\bibitem{he2008active}
Y.-B. He and Z.~Geng, ``Active learning of causal networks with intervention
  experiments and optimal designs,'' \emph{Journal of Machine Learning
  Research}, vol.~9, no. Nov, pp. 2523--2547, 2008.

\bibitem{buntine1996guide}
W.~Buntine, ``A guide to the literature on learning probabilistic networks from
  data,'' \emph{IEEE Transactions on knowledge and data engineering}, vol.~8,
  no.~2, pp. 195--210, 1996.

\bibitem{talukdar2017learning}
S.~Talukdar, D.~Deka, B.~Lundstrom, M.~Chertkov, and M.~V. Salapaka, ``Learning
  exact topology of a loopy power grid from ambient dynamics,'' in
  \emph{Proceedings of the Eighth International Conference on Future Energy
  Systems}, 2017, pp. 222--227.

\bibitem{pass2018thermodynamic}
R.~Z. Pass, M.~Wetter, and M.~A. Piette, ``A thermodynamic analysis of a novel
  bidirectional district heating and cooling network,'' \emph{Energy}, vol.
  144, pp. 20--30, 2018.

\bibitem{nunez2015synchronization}
F.~Nunez, Y.~Wang, and F.~J. Doyle, ``Synchronization of pulse-coupled
  oscillators on (strongly) connected graphs,'' \emph{IEEE Transactions on
  Automatic Control}, vol.~60, no.~6, pp. 1710--1715, 2015.

\bibitem{patel2017distributed}
S.~Patel, S.~Attree, S.~Talukdar, M.~Prakash, and M.~V. Salapaka, ``Distributed
  apportioning in a power network for providing demand response services,'' in
  \emph{2017 IEEE International Conference on Smart Grid Communications
  (SmartGridComm)}.\hskip 1em plus 0.5em minus 0.4em\relax IEEE, 2017, pp.
  38--44.

\bibitem{goncalves2008necessary}
J.~Goncalves and S.~Warnick, ``Necessary and sufficient conditions for
  dynamical structure reconstruction of lti networks,'' \emph{IEEE Transactions
  on Automatic Control}, vol.~53, no.~7, pp. 1670--1674, Aug 2008.

\bibitem{yuan2011robust}
Y.~Yuan, G.~B. Stan, S.~Warnick, and J.~Goncalves, ``Robust dynamical network
  structure reconstruction,'' \emph{Automatica}, vol.~47, no.~6, pp. 1230 --
  1235, 2011, special Issue on Systems Biology.

\bibitem{gevers2017on}
M.~Gevers, A.~S. Bazanella, and A.~Parraga, ``On the identifiability of
  dynamical networks,'' \emph{IFAC-PapersOnLine}, vol.~50, no.~1, pp. 10\,580
  -- 10\,585, 2017, 20th IFAC World Congress.

\bibitem{weerts2018identifiability}
H.~H. Weerts, P.~M.~V. den Hof, and A.~G. Dankers, ``Identifiability of linear
  dynamic networks,'' \emph{Automatica}, vol.~89, pp. 247 -- 258, 2018.

\bibitem{van2019necessary}
H.~J. Van~Waarde, P.~Tesi, and M.~K. Camlibel, ``Necessary and sufficient
  topological conditions for identifiability of dynamical networks,''
  \emph{IEEE Transactions on Automatic Control}, 2019.

\bibitem{MatSal12}
D.~Materassi and M.~V. Salapaka, ``On the problem of reconstructing an unknown
  topology via locality properties of the wiener filter,'' \emph{IEEE
  transactions on automatic control}, vol.~57, no.~7, pp. 1765--1777, 2012.

\bibitem{talukdar2017Exact}
S.~{Talukdar}, D.~{Deka}, D.~{Materassi}, and M.~{Salapaka}, ``Exact topology
  reconstruction of radial dynamical systems with applications to distribution
  system of the power grid,'' in \emph{2017 American Control Conference (ACC)},
  2017, pp. 813--818.

\bibitem{talukdar2020physics}
S.~Talukdar, D.~Deka, H.~Doddi, D.~Materassi, M.~Chertkov, and M.~V. Salapaka,
  ``Physics informed topology learning in networks of linear dynamical
  systems,'' \emph{Automatica}, vol. 112, p. 108705, 2020.

\bibitem{etesami2016learning}
J.~Etesami, N.~Kiyavash, and T.~Coleman, ``Learning minimal latent directed
  information polytrees,'' \emph{Neural computation}, vol.~28, no.~9, pp.
  1723--1768, 2016.

\bibitem{sepehr2019blind}
F.~Sepehr and D.~Materassi, ``Blind learning of tree network topologies in the
  presence of hidden nodes,'' \emph{IEEE Transactions on Automatic Control},
  2019.

\bibitem{stankovic2018distributed}
M.~S. Stankovic, S.~S. Stankovic, and K.~H. Johansson, ``Distributed time
  synchronization for networks with random delays and measurement noise,''
  \emph{Automatica}, vol.~93, pp. 126 -- 137, 2018.

\bibitem{cho2014survey}
H.-H. Cho, C.-Y. Chen, T.~K. Shih, and H.-C. Chao, ``Survey on underwater
  delay/disruption tolerant wireless sensor network routing,'' \emph{IET
  Wireless Sensor Systems}, vol.~4, no.~3, pp. 112--121, 2014.

\bibitem{cavraro2019dynamic}
G.~Cavraro, E.~Dall'Anese, and A.~Bernstein, ``Dynamic power network state
  estimation with asynchronous measurements,'' National Renewable Energy
  Lab.(NREL), Golden, CO (United States), Tech. Rep., 2019.

\bibitem{leong17sensor}
A.~S. Leong, S.~Dey, and D.~E. Quevedo, ``Sensor scheduling in variance based
  event triggered estimation with packet drops,'' \emph{IEEE Transactions on
  Automatic Control}, vol.~62, no.~4, pp. 1880--1895, 2017.

\bibitem{zhou2018distributed}
J.~Zhou, G.~Gu, and X.~Chen, ``Distributed kalman filtering over wireless
  sensor networks in the presence of data packet drops,'' \emph{IEEE
  Transactions on Automatic Control}, vol.~64, no.~4, pp. 1603--1610, 2018.

\bibitem{SLS17network}
V.~R. Subramanian, A.~Lamperski, and M.~V. Salapaka, ``Network topology
  identification from corrupt data streams,'' in \emph{IEEE 56th Annual
  Conference on Decision and Control (CDC)}, 2017, pp. 1695--1700.

\bibitem{SLS2019corruption}
V.~R. {Subramanian}, A.~{Lamperski}, and M.~V. {Salapaka}, ``Corruption
  detection in networks of bi-directional dynamical systems,'' in \emph{2019
  IEEE 58th Conference on Decision and Control (CDC)}, 2019, pp. 4545--4550.

\bibitem{talukdar2018topology}
S.~Talukdar, D.~Deka, M.~Chertkov, and M.~Salapaka, ``Topology learning of
  radial dynamical systems with latent nodes,'' in \emph{2018 Annual American
  Control Conference (ACC)}.\hskip 1em plus 0.5em minus 0.4em\relax IEEE, 2018,
  pp. 1096--1101.

\bibitem{SLSarXivlinear}
\BIBentryALTinterwordspacing
V.~R. {Subramanian}, A.~{Lamperski}, and M.~V. {Salapaka}, ``{Network Structure
  Identification from Corrupt Data Streams},'' \emph{arXiv e-prints}, p.
  arXiv:2006.04690, Jun. 2020. [Online]. Available:
  \url{https://ui.adsabs.harvard.edu/abs/2020arXiv200604690S}
\BIBentrySTDinterwordspacing

\end{thebibliography}
\section*{Appendix: Proof of Theorem \ref{thm:mainalg}}
\emph{Structure Learning with Latent Corrupt Nodes:}
Let $y$ be the time series measurements of corrupt nodes, $Y$, detected after Theorem ~\ref{thm:phase}. 
 Let $o$ denote the set of measurements without $y$. That is, $o=u\setminus Y$. We compute the inverse PSD of $o$. Now, using sparsity pattern in inverse PSD of $o$ as adjacency matrix construct an undirected graph, $\mathcal{T}_m=(V_o,A_o)$. The following result from \cite{talukdar2018topology} characterizes the edges in $\mathcal{T}_m$ and the generative topology $G^T$. 
\begin{lemma}\label{lemma:latentPSD}
Consider a linear dynamical system with generative topology $G^T$. Then, $\Phi _{oo}^{-1}(i,j)(\omega)\ne 0$ for $\omega \in (-\pi,\pi]$, implies that $i$ and $j$ are within four hops of each other in $G^T$.  
\end{lemma}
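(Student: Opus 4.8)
The plan is to realize $\Phi_{oo}^{-1}$ as a Schur complement of $\Phi_{xx}^{-1}$ and then track the support of each resulting term through the kinship structure of Theorem~\ref{thm:KinshipInf} and \eqref{eq:phiXinv}. Partition $V = o \,\cup\, Y$, where $Y$ is the set of hidden (corrupt) nodes that have been marginalized out. Since $\Phi_{oo}$ is exactly the $(o,o)$ block of $\Phi_{xx}$, block inversion gives
\begin{equation*}
\Phi_{oo}^{-1} = \big(\Phi_{xx}^{-1}\big)_{oo} - \big(\Phi_{xx}^{-1}\big)_{oY}\Big[\big(\Phi_{xx}^{-1}\big)_{YY}\Big]^{-1}\big(\Phi_{xx}^{-1}\big)_{Yo}.
\end{equation*}
The first thing I would establish is that $\big(\Phi_{xx}^{-1}\big)_{YY}$ is diagonal: by the standing assumption that distinct corrupt nodes are at least three hops apart in the tree $G^T$ (Assumption~\ref{assume:Zloc}), no two elements of $Y$ are kins, so the off-diagonal entries vanish by \eqref{eq:phiXinv}, while the diagonal entries are strictly positive by the third case of \eqref{eq:phiXinv}; hence the inverse exists and is again diagonal.

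Next I would invoke the fact that in a tree $kin(i)$ consists precisely of the nodes at distance at most two from $i$, so that $(\Phi_{xx}^{-1})_{pq}\ne 0$ forces $p$ and $q$ to be within two hops in $G^T$. Applying this to the first term shows it is supported on pairs within two hops. For the correction term, diagonality of $\big[(\Phi_{xx}^{-1})_{YY}\big]^{-1}$ makes its $(i,j)$ entry equal to $\sum_{v\in Y}(\Phi_{xx}^{-1})_{iv}\,[(\Phi_{xx}^{-1})_{vv}]^{-1}\,(\Phi_{xx}^{-1})_{vj}$, which can be nonzero only if some hidden node $v$ satisfies both ``$i$ within two hops of $v$'' and ``$v$ within two hops of $j$''; the triangle inequality along the unique tree paths then yields distance at most four between $i$ and $j$. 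Combining the two terms, $\Phi_{oo}^{-1}(i,j)\ne 0$ implies that $i$ and $j$ are within four hops in $G^T$, which is the claim.

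The argument is essentially bookkeeping on supports, so I do not expect genuine difficulty; the one point worth stressing is that we only need the implication ``nonzero $\Rightarrow$ within four hops'', and therefore possible cancellations between the two Schur-complement terms are harmless — they can only create additional zeros — so no genericity hypothesis on $\trf$ is needed for this direction. The step most in need of care is the diagonality of $\big(\Phi_{xx}^{-1}\big)_{YY}$, since it is precisely the three-hop separation of the hidden nodes that prevents the Schur complement from coupling observed nodes across a long chain of hidden nodes, which would otherwise break the four-hop bound.
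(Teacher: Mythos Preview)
The paper does not actually prove this lemma; it is quoted verbatim as a result from \cite{talukdar2018topology} and used as a black box in the appendix. Your Schur-complement argument is correct and supplies a self-contained proof that the paper omits. The identification of $\Phi_{oo}$ with the $(o,o)$ block of $\Phi_{xx}$ is valid because the retained streams are uncorrupted ($o_i=x_i$ for $i\notin Y$), the block-inversion identity you write is the standard one, and your use of Assumption~\ref{assume:Zloc}(C\ref{case:corrupt}) to force $(\Phi_{xx}^{-1})_{YY}$ to be diagonal is exactly the step that prevents a chain of hidden nodes from propagating correlations beyond four hops. The support bookkeeping and the remark that cancellations are harmless for the stated implication are both correct.
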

\emph{True edge set discovery between observed nodes:}
The graph $\mathcal{T}_m$ inferred from Lemma \ref{lemma:latentPSD} contains spurious edges. The objective here is to eliminate the spurious edges and thus identify the true edges. To this, following notion of separation in undirected graphs is introduced.
\begin{definition}[Separation]
Given an undirected graph $G=(V,A)$, the set of nodes $Z\subset V$ is said to \emph{separate} the path between nodes $i$, $j$, if there exist no path between $i$ and $j$ in $G$ after removing the set of nodes $Z$. We denote this by $sep(i,j|Z)$ which is read as $i,j$ are separated by $Z$. 
\end{definition}

\noindent The following result from \cite{talukdar2018topology} provides a topological method based on separation property to identify the observed non-leaf nodes and identify the true edges between them. 
\begin{lemma}\label{lemma:separation}
Suppose $\mathcal{T}_m$ is the graph inferred using measurements $o$ in Lemma ~\ref{lemma:latentPSD}. Suppose there exist observed nodes $c,d$ distinct from observed nodes $a,b$ such that $a-b\in \mathcal{T}_m$. Then, $sep(c,d|\{a,b\})$ holds in $\mathcal{T}_m$ if and only if $a-b$ is a true edge in $G^T$ and $a,b$ are non-leaf nodes.
\end{lemma}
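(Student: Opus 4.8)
The plan is to prove the equivalence in both directions, reading it (as the algorithm on line~10 of Algorithm~\ref{alg:hidelearn} uses it) existentially over the witness nodes $c,d$: there exist observed $c,d\notin\{a,b\}$ with $sep(c,d\mid\{a,b\})$ in $\mathcal{T}_m$ iff $a-b$ is a true edge of $G^T$ with $a,b$ non-leaf. The first step is to pin down the edge structure of $\mathcal{T}_m$ more precisely than Lemma~\ref{lemma:latentPSD} does. Since only the entries of $Y$ are dropped, $\Phi_{oo}^{-1}$ is the inverse of the observed principal sub-block of $\Phi_{xx}$; eliminating the hidden nodes one at a time by Schur complements, and using that the nodes of $Y$ are pairwise at least three hops apart (C\ref{case:corrupt})), so $\Phi_{xx}^{-1}$ vanishes between any two of them by \eqref{eq:phiXinv}, the elimination steps decouple. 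Hence, under the generic non-degeneracy of Remark~\ref{rem:otherdirectionMatSal}, an edge $i-j$ lies in $\mathcal{T}_m$ if and only if $i,j\in V_o$ and either $d_{G^T}(i,j)\le 2$ (an edge of $G^M$ restricted to $V_o$) or some $l\in Y$ has both $i$ and $j$ within two hops of $l$ in $G^T$ (an edge of the ``$2$-ball clique'' of $l$; note $d_{G^T}(i,j)\le 4$, consistent with Lemma~\ref{lemma:latentPSD}). In particular $\mathcal{T}_m\supseteq G^M|_{V_o}$ and $\mathcal{T}_m$ is connected.

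For the constructive ($\Leftarrow$) direction, assume $a-b$ is a true edge and $a,b$ are non-leaf. Deleting $\{a,b\}$ from the tree $G^T$ leaves a forest with at least two components, at least one ``$a$-side'' component (hanging off a neighbour of $a$ other than $b$) and at least one ``$b$-side'' component. I would first show each such component contains an observed node: a single-vertex component is a leaf of $G^T$, hence not corrupt by C\ref{case:leaf}), and a component containing a corrupt node $l$ also contains a $G^T$-neighbour of $l$ other than $a,b$, which is not corrupt by C\ref{case:corrupt})--C\ref{case:leaf}). Pick observed $c$ in an $a$-side component and observed $d$ in a $b$-side component. Then $sep(c,d\mid\{a,b\})$ holds in $\mathcal{T}_m$: by the edge characterisation above, an edge of $\mathcal{T}_m$ joining an observed $a$-side node $\alpha$ to an observed $b$-side node $\beta$ while avoiding $\{a,b\}$ is impossible, since $d_{G^T}(\alpha,\beta)\ge 3$ rules out an edge of $G^M$, and a common hidden node $l$ within two hops of both $\alpha$ and $\beta$ would have to lie on the $\alpha$--$\beta$ path within distance two of each, forcing $l\in\{a,b\}$ — contradicting $a,b\in V_o$. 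Hence every $c$--$d$ path in $\mathcal{T}_m$ passes through $a$ or $b$.

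For the converse ($\Rightarrow$), assume $a-b\in\mathcal{T}_m$ but either $a-b$ is not a true edge (so $d_{G^T}(a,b)\ge 2$) or one of $a,b$ is a leaf; I claim no observed $c,d\ne a,b$ is separated by $\{a,b\}$. Given observed $c,d$, take the unique $G^T$-path $P$ from $c$ to $d$ and construct a detour in $\mathcal{T}_m\setminus\{a,b\}$: each corrupt node on $P$ is flanked on $P$ by two observed nodes that are within one hop of it and at $G^T$-distance two (so they form an edge of its $2$-ball clique), so it can be bypassed; this yields a walk using only observed nodes. Wherever $a$ or $b$ appears on this walk, C\ref{case:corrupt}) forbids a corrupt node on both of its immediate sides, so its two walk-neighbours are at $G^T$-distance at most three, through $a$, through $b$, or through a single corrupt node; in every case they are either within two hops (a $G^M$ edge) or both within two hops of a hidden node lying strictly between them (hence not $a$ or $b$), which supplies a bypass edge disjoint from $\{a,b\}$. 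A leaf among $a,b$ cannot occur as an interior vertex of $P$ (it is not $c$ or $d$), so it never needs bypassing. The only arrangement admitting no bypass is two adjacent observed non-leaf nodes joined by a true edge with nothing hidden between them — exactly the excluded case — so a $c$--$d$ path avoiding $\{a,b\}$ always exists, and $sep(c,d\mid\{a,b\})$ fails.

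The main obstacle I expect is the case analysis in the converse: one must enumerate how $a$ and $b$ can sit along the $c$--$d$ path — adjacent, separated by a corrupt node, or separated by an observed node — and verify that a bypass edge is available in each arrangement, which is precisely where C\ref{case:leaf})--C\ref{case:corrupt}) and the sharp $1$-hop / $2$-hop / $\ge 3$-hop trichotomy of $\Phi_{xx}^{-1}$ in \eqref{eq:phiXinv} are used. A secondary point, needed in the constructive direction, is the claim that every component of $G^T\setminus\{a,b\}$ contains an observed node (leaning on C\ref{case:leaf})), together with the implicit appeal to generic non-degeneracy so that the $G^M$-edges and $2$-ball-clique edges are actually present in $\mathcal{T}_m$ (the converse of Lemma~\ref{lemma:latentPSD}).
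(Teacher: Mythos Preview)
The paper does not give its own proof of this lemma: in the appendix it is introduced as ``the following result from \cite{talukdar2018topology}'' and then used as a black box inside the proof of Theorem~\ref{thm:mainalg}. So there is nothing in the paper to compare your argument against; what you have written is an independent proof.

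Your overall strategy is sound. Sharpening Lemma~\ref{lemma:latentPSD} to an exact edge description of $\mathcal{T}_m$ via the Schur complement --- using that $(\Phi_{xx}^{-1})_{YY}$ is diagonal by C\ref{case:corrupt}) and \eqref{eq:phiXinv}, so the hidden nodes decouple --- is exactly the right first step, and both directions then become tree combinatorics. The $(\Leftarrow)$ direction is clean; your check that every component of $G^T\setminus\{a,b\}$ contains an observed node and the median-point argument forcing $l\in\{a,b\}$ are correct.

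There is one small gap in the $(\Rightarrow)$ case analysis. Your claim that the two walk-neighbours of $a$ (or of $b$) are at $G^T$-distance at most three is fine for the \emph{first} bypass, but not always for the second. Concretely, when $d_{G^T}(a,b)=2$ with the intermediate node $w\in Y$ (the only way $a$ and $b$ can become walk-adjacent after Step~1, since the true-edge case is excluded by hypothesis), the walk reads $u-a-b-v$ with $u=w_{i-1}$, $v=w_{j+1}$ both observed and $d_{G^T}(u,v)=4$. Bypassing $a$ alone gives the edge $u-b$ (via the hidden $w$), leaving $\dots u-b-v\dots$; now $b$'s walk-neighbours are $u,v$ at distance $4$, not $\le 3$, so your stated bound fails. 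The conclusion, however, survives: that same hidden $w$ has $d_{G^T}(u,w)=d_{G^T}(w,v)=2$, so $u-v\in\mathcal{T}_m$ and one can bypass $a$ and $b$ jointly. Once you add this joint-bypass sub-case (or simply weaken the ``$\le 3$'' claim to ``$\le 4$ with a hidden node of $Y$ within two hops of both''), the enumeration is complete and the proof goes through.
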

\noindent Combining Lemma \ref{lemma:separation} with the output of Algorithm `\ref{alg:detectCorrupt} that detected the only true edge associated with all the leaf nodes, we have thus identified all true edges associated with the observed nodes. Denote this graph as $\Theta $. 

\emph{Placement of Corrupt Nodes:} The graph $\Theta$ will have multiple radial \textit{disconnected components} denoted as $\theta _j$,  with the disconnections being at the location of the latent corrupt nodes, $Y$. Based on our assumptions, it can be shown that each disconnected component has at least two observed nodes. Thus, for all node $p\in \theta _j$, there is another node $q \in \theta _j$ such that $p-q \in G^T$. Since $G^T$ is a connected graph, the final step is to connect the disconnected components by placing the corrupt nodes at the disconnected locations.
We make use of the prior knowledge gained by inferring the perturbed graph $G^U$ and we map every corrupt node $i\in Y$ to it's corresponding neighborhood  $\nbr _u(i)$ in $G^U$. The following lemma precisely characterizes this.
\begin{lemma}\label{lemma:placecorrupt}
Let $\Theta $ be the disconnected network inferred after removing spurious edges between the observed nodes based on Lemma ~\ref{lemma:separation}. Consider two disconnected components $\theta _1$, $\theta _2$ in $\Theta$ with observed nodes $q\in \theta _1$ and $r\in \theta _2$.  Consider all $p\in \theta _1$ and all $s\in \theta _2$ such that $p-q$ and $r-s$ are edges in $\theta _1$ and  $\theta _2$ respectively. Consider a corrupt node $l\in Y$. Suppose $\{p,q,l,r,s\}$ forms a clique in the perturbed graph, $G^U$. Then, $p-q-l-r-s$ holds in $G^T$ if and only  if
$\angle \Phi ^{-1}_{uu}(\omega)(p,s)$ is a constant for all $\omega \in (-\pi, \pi]$.
\end{lemma}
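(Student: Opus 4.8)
\textbf{Proof proposal for Lemma \ref{lemma:placecorrupt}.}

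The plan is to leverage the neighborhood characterization of corrupt nodes (Proposition \ref{prop:nbru}) together with the inductive collapse $\Phi^{-1}_{uu}(a,b) = \Psi^{-1}_1(a,b)$ that was established inside the proof of Theorem \ref{thm:phase} for pairs $a,b$ that are ``clique-forming'' nodes or far from corruptions. First I would use the clique hypothesis: since $\{p,q,l,r,s\}$ forms a clique in $G^U$ and $l \in Y$, by Proposition \ref{prop:nbru} the nodes $p,q,r,s$ all lie in $1\text{-}hop(l)\cup 2\text{-}hop(l)$. I would then argue that the only way $p,q,l,r,s$ can mutually be neighbors in $G^U$ while being partitioned between two components $\theta_1 \ni p,q$ and $\theta_2 \ni r,s$ (whose disconnection in $\Theta$ is caused precisely by the removal of latent corrupt nodes) is that $l$ is the corrupt node bridging these components; a short case analysis on whether each of $p,q,r,s$ is a $1$-hop or $2$-hop neighbor of $l$ in $G^T$, combined with Assumption \ref{assume:Zloc} (corrupt nodes at least 3 hops apart and 3 hops from leaves), pins down the only feasible configuration as $p-q-l-r-s$ in $G^T$ (up to the symmetric labeling), giving the ``if'' direction's skeleton. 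This topological step is where most of the care is needed.

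For the ``$\Leftarrow$'' direction ($\angle\Phi^{-1}_{uu}(p,s)$ constant $\Rightarrow$ $p-q-l-r-s$ in $G^T$): I would suppose for contradiction that $p$ and $s$ are not in the configuration forced above, i.e.\ that $p,s$ are actually within two hops in $G^T$ through a \emph{non-corrupt} common neighbor, or that the bridging node is not $l$. In the first case, the same computation as in the $(\Leftarrow)$ part of Theorem \ref{thm:phase} (equations \eqref{eq:leafUU}, \eqref{eq:phiXinv}) shows $\Phi^{-1}_{uu}(p,s) = \overline{\trf}_{cp}\trf_{cs}\Phi^{-1}_{e_c}$ for the common neighbor $c$, which is a positive real scalar and thus \emph{has} constant (zero) phase --- so constancy of the phase alone does not distinguish this from the desired configuration, and I would instead need to rule it out using the clique structure plus Proposition \ref{prop:nbru} already established for $l$. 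In the configuration $p-q-l-r-s$, by the collapse $\Phi^{-1}_{uu}(p,s) = \Psi^{-1}_1(p,s)$ I would expand $\Psi^{-1}_1(p,s) = \Psi^{-1}_0(p,s) - \Psi^{-1}_0(p,l)\Psi^{-1}_0(l,s)\Delta_1^{-1}$; since $p,s$ are $4$ hops apart in $G^T$, $\Phi^{-1}_{xx}(p,s)=0$ so $\Psi^{-1}_0(p,s)=0$, leaving $\Phi^{-1}_{uu}(p,s) = -\Psi^{-1}_0(p,l)\Psi^{-1}_0(l,s)\Delta_1^{-1}$, and then using \eqref{eq:psi0} and \eqref{eq:phiXinv} this is a ratio of products of the $2$-hop entries $\trf_{lp},\trf_{ls}$ (real-scalar-like in structure after the $|S|$ factors) divided by the scalar $\Delta_1$; the claim to verify is that this has constant phase exactly when $l$ is the bridging corrupt node.

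For the ``$\Rightarrow$'' direction I would contrapose: assuming $p-q-l-r-s$ does \emph{not} hold in $G^T$ despite the clique condition, the forced alternative from the topological analysis is that the true bridging corrupt node is some $l' \neq l$ (with $l$ then being, e.g., a $2$-hop neighbor of $p$ through a non-corrupt node), and in that configuration $\Phi^{-1}_{uu}(p,s)$ picks up a $\Gamma$-correction term involving $\Psi^{-1}_1(p,l')\Psi^{-1}_1(l',s)$ that, after expansion via \eqref{eq:expandPhiuu}--\eqref{eq:expandphiuu2}, carries the non-constant transfer-function factors $1/(S_p|S_p|^2)$ etc., yielding a non-constant phase --- contradicting the hypothesis. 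I expect the main obstacle to be the careful enumeration in the topological step: one must show the clique condition in $G^U$ together with the component structure of $\Theta$ leaves \emph{only} the configurations $p-q-l-r-s$ and the ``wrong bridge'' configuration as possibilities, and that these two are always separated by the phase test; handling the boundary cases where some of $p,q,r,s$ coincide with $2$-hop rather than $1$-hop neighbors of $l$, and ensuring Assumption \ref{assume:Zloc} excludes every spurious alternative, is the delicate part. The algebraic phase computations, by contrast, are routine repetitions of the arguments already used in Theorem \ref{thm:phase}.
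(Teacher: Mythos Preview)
Your algebraic computation for the implication $p-q-l-r-s \Rightarrow \angle\Phi^{-1}_{uu}(p,s)$ constant is essentially what the paper does (its Proposition~\ref{prop:relaxeddistphase}), but there are two genuine gaps elsewhere.

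First, the ``inductive collapse'' $\Phi^{-1}_{uu}(a,b)=\Psi^{-1}_1(a,b)$ that you borrow from Theorem~\ref{thm:phase} was proved there only for rows indexed by $i\in B$, i.e.\ $i$ a leaf or a corrupt node. Here $p,q,r,s$ are non-leaf, non-corrupt observed nodes, so that induction does not apply as written. The paper supplies a separate induction (Proposition~\ref{prop:localPsi}) whose key step is different: for any other corrupt node $v_k\neq l$, the tree property forbids both $a-v_k$ and $v_k-b$ from holding simultaneously (else there would be a second $a$--$b$ path besides the one through $l$), and this is what forces $\Gamma_{k}(a,b)=0$. You would need to provide this argument; it does not come for free from Theorem~\ref{thm:phase}.

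Second, and more seriously, your contrapositive for the ``only if'' direction targets the wrong alternative. You posit that if $p-q-l-r-s$ fails then the true bridging corrupt node is some $l'\neq l$. But once $\{p,q,l,r,s\}$ is a clique in $G^U$, Proposition~\ref{prop:nbru} forces $p,q,r,s\in 1\text{-}hop(l)\cup 2\text{-}hop(l)$, and together with the tree structure and the fact that $\theta_1,\theta_2$ are separated only by corrupt nodes, $l$ itself is the bridge. The actual alternatives to $p-q-l-r-s$ are the other three \emph{alignments} $q-p-l-r-s$, $p-q-l-s-r$, $q-p-l-s-r$: same $l$, different choice of which of $\{p,q\}$ and which of $\{r,s\}$ is the $1$-hop neighbor of $l$. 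In each of these, at least one of $p,s$ is a $1$-hop (not $2$-hop) neighbor of $l$, so in $\Phi^{-1}_{uu}(p,s)=\Psi^{-1}_0(p,s)-\Psi^{-1}_0(p,l)\Psi^{-1}_0(l,s)\Delta_1^{-1}$ one of the factors carries the non-real $1$-hop structure of \eqref{eq:phiXinv}, producing a non-constant phase. That is the contrast the paper actually draws; your $l'\neq l$ scenario never arises under the clique hypothesis, and the phase computation you sketch for it (involving $\Gamma$-terms at $l'$) is beside the point.
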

\begin{proof}
Since $\{p,q,l,r,s\}$ forms a clique in $G^U$ and $G^T$ is a tree, it follows that $l$ is located at the point of disconnection between $p-q$ and $r-s$. What needs to be shown is the correct alignment among the paths $q-p-l-r-s$, $p-q-l-s-r$ and $q-p-l-s-r$ and $p-q-l-r-s$ in $G^T$. To this we will analyze the phase of inverse PSD entry corresponding to pairs from $\{p,q\}\times \{r,s\}$ described in Proposition ~\ref{prop:relaxeddistphase}.    
Before that we will need the following proposition.
\begin{proposition}\label{prop:localPsi}
Suppose $p-q-l-r-s$ holds in $G^T$ where $l$ is a corrupt node. Then, for any $a\in \{p,q\}$ and $b\in \{r,s\}$, $\Phi ^{-1}_{uu}(\omega)(a,b)=\Psi ^{-1}_{1}(\omega)(a,b)$, where $\Psi ^{-1}_1$ is defined by \eqref{eq:invPsiInduction} and $v_1=l$.
\end{proposition}
\begin{proof}
For $k=2,3,\dots , n$ perturbed nodes we will inductively show that $\Psi ^{-1}_k(a,b)=\Psi ^{-1}_1(a,b)$. 

We will require the following claim: 
for any $v_k$ being a perturbed node ($k>1$), at the most only one of $a-v_k$ or $v_k-b$ holds in $G^T$. Note that there is already a path, $p-q-l-r-s$, consisting of $a$ and $b$.  As $v_k\ne l$ and $G^T$ is a tree, existence of $a-v_k-b$ violates the assumption that $G^T$ is a tree. This proves the claim. In other words, at least one of $a-v_k$ or $v_k-b$ does not hold in $G^T$. Suppose $v_k-b$ does not hold true. (The case $a-v_k$ can be shown similarly). Refer to this result as $R1$.  

Consider $k=2$. We will show that $\Psi^{-1}_2(a,b)=\Psi^{-1}_1(a,b)$. Using ~\eqref{eq:invPsiInduction},   
$\Psi ^{-1} _{2}(a,b)=\Psi ^{-1}_{1}(a,b)-\Gamma _{2}(a,b)$. Note that $\Gamma _2(a,b)=\Psi ^{-1}_1(a,v_2)\Psi ^{-1}_1(v_2,b)\Delta ^{-1}_{2}$. We will show that $\Gamma _2(a,b)=0$. Using ~\eqref{eq:invPsiInduction}, $\Psi ^{-1} _{1}(v_2,b)=\Psi ^{-1}_{0}(v_2,b)-\Gamma _{1}(v_2,b)$ where $\Gamma_1(v_2,b):= \Psi^{-1} _{0}(v_2,v_1)\Psi
^{-1}_{0}(v_1,b)\Delta ^{-1}_{1}.$ By ~\eqref{eq:psi0}, $\Psi ^{-1}_0(v_1,v_2)=\frac{\Phi ^{-1}_{xx}(v_1,v_2)}{\overline{h_{v_1}}h_{v_2}}$. 
As $v_1$ and $v_2$ are at least 3-hops away in $G^T$, using ~\eqref{eq:phiXinv}, $\Phi ^{-1}_{xx}(v_1,v_2)=0$. 
 Thus, $\Gamma _1(v_2,v_1)=0.$  
  Invoking $R1$, we have that $v_2-b$ does not hold in $G^T$. Then, $v_2$ can either be a 2-hop neighbor of $b$ in $G^T$ or not. 
  
  Suppose $v_2$ is a 2-hop neighbor of $b$. 
  Then, $v_2$ cannot be a 1-hop or 2-hop neighbor of $a$ because this leads to two paths connecting $a$ and $b$: one through $v_2$ and the other being $p-q-l-r-s$, violating the condition that $G^T$ is a tree. Thus, $v_2$ is neither a 1-hop neighbor nor a 2-hop neighbor of $a$ in $G^T$. 
  Thus, using ~\eqref{eq:phiXinv} we have that $\Phi ^{-1}_{xx}(a,v_2)=0$. By ~\eqref{eq:psi0}, $\Psi ^{-1}_0(a,v_2)=\frac{\Phi ^{-1}_{xx}(a,v_2)}{\overline{h_{a}}h_{v_2}}$. This implies $\Psi _{0}^{-1}(a,v_2)=0$. Therefore, $\Psi _{1}^{-1}(a,v_2)=0$ and hence $\Gamma _2(a,b)=0$. Thus we have proved that $\Psi ^{-1} _{2}(a,b)=\Psi ^{-1}_{1}(a,b)$.  
  
Now consider that $v_2$ is not a 2-hop neighbor of $b$.  Then, using ~\eqref{eq:phiXinv} we have that $\Phi ^{-1}_{xx}(v_2,b)=0$. By ~\eqref{eq:psi0}, $\Psi ^{-1}_0(v_2,b)=\frac{\Phi ^{-1}_{xx}(v_2)}{\overline{h_{v_2}}h_{b}}$. This implies $\Psi _{0}^{-1}(v_2,b)=0$. Therefore, $\Psi _{1}^{-1}(v_2,b)=0$ and hence $\Gamma _2(a,b)=0$. Thus we have proved that $\Psi ^{-1} _{2}(a,b)=\Psi ^{-1}_{1}(a,b)$.

Now assume that the claim holds for some $k>2$. That is, $\Psi ^{-1}_k(a,b)=\Psi ^{-1}_1(a,b).$ Using ~\eqref{eq:invPsiInduction},   
$\Psi ^{-1} _{k+1}(a,b)=\Psi ^{-1}_{k}(a,b)-\Gamma _{k+1}(a,b)$. 
As shown in Theorem ~\ref{thm:phase}, $\Psi ^{-1}_{k}(a,v_{k+1})=\Psi ^{-1}_{1}(a,v_{k+1})$ and $\Psi ^{-1} _{k}(v_{k+1},b)=\Psi ^{-1}_1 (v_{k+1},b)$. Invoking $R1$, we have that $v_{k+1}-b$ does not hold in $G^T$. Then, $v_{k+1}$ can either be a 2-hop neighbor of $b$ in $G^T$ or not. 
  
  Suppose $v_{k+1}$ is a 2-hop neighbor of $b$. Similar to argument for $v_2$, $v_{k+1}$ cannot be a 1-hop or 2-hop neighbor of $a$. As $v_1$ and $v_{k+1}$ are at least 3-hops away in $G^T$, using ~\eqref{eq:phiXinv}, $\Phi ^{-1}_{xx}(v_1,v_{k+1})=0$. Now consider that $v_{k+1}$ is not a 2-hop neighbor of $b$. Similar to argument for $v_2$, $\Psi _{1}^{-1}(v_{k+1},b)=0$ and hence $\Psi _{k}^{-1}(v_{k+1},b)=0$ and $\Gamma _{k+1}(a,b)=0$. Thus we have proved that $\Psi ^{-1} _{k+1}(a,b)=\Psi ^{-1}_{1}(a,b)$. 
\end{proof}
We now proceed to stating and proving Proposition ~\ref{prop:relaxeddistphase}.
\begin{proposition} \label{prop:relaxeddistphase}
\it{If $p-q-l-r-s$ holds in $G^T$, then $\angle \Phi ^{-1}_{uu}(\omega)(p,s)$ is a constant while $\angle \Phi ^{-1}_{uu}(\omega)(p,r)$, $\angle \Phi ^{-1}_{uu}(\omega)(q,s)$ and $\angle \Phi ^{-1}_{uu}(\omega)(q,r)$ are non-constant for all $\omega \in [-\pi, \pi]$.}
\end{proposition}
\begin{proof}  
Using Proposition ~\ref{prop:localPsi} we have that for any $a\in \{p,q\}$ and $b\in \{r,s\}$, $\Phi ^{-1}_{uu}(\omega)(a,b)=\Psi ^{-1}_{1}(\omega)(a,b)$, where $\Psi ^{-1}_1$ is defined by \eqref{eq:invPsiInduction} and $v_1=l$.

As $v_1=l$ is the only corrupt node, $\Phi ^{-1}_{uu}(\omega)(a,b)$ can be expressed as:
\begin{equation}\label{eq:PSentry}
    \Phi ^{-1} _{uu}(a,b)=\Psi ^{-1}_{0}(a,b)-\Psi^{-1} _{0}(a,l)\Psi
^{-1}_{0}(l,b)\Delta ^{-1}_{v_1}.
\end{equation} 
Moreover, as $a,b$ are not corrupt nodes, we have that $h_a(\omega)=h_b(\omega)=1$. Thus, $\Psi^{-1} _{0}(a,l) =\Phi^{-1} _{xx}(a,l)$ and $\Psi^{-1} _{0}(l,b) =\Phi^{-1} _{xx}(l,b)$. Now we will show that the term $\Delta ^{-1}_{v_1}$ is real valued for all $\omega$.
Now, $\Delta_{l}=d_{v_1}^{-1}+\Psi ^{-1}_{0}(v_1,v_1)$. By ~\eqref{eq:psi0} we have that $\Psi ^{-1}_{0}(v_1,v_1)=\frac{\Phi ^{-1}_{xx}(v_1,v_1)}{|h_l(\omega)|^2}$. Using ~\eqref{eq:phiXinv}, we have that $\Phi ^{-1}_{xx}(v_1,v_1)$ is real valued and therefore, $\Psi ^{-1}_{0}(v_1,v_1)$ is real valued. As $d(\omega)$ is the PSD of autocorrelation of a WSS process it will be real and non-negative valued. A more formal description is in \cite{SLSarXivlinear}. Thus, $\Delta ^{-1}_{v_1}$ is real valued for all $\omega \in [- \pi, \pi]$.

We now proceed to evaluating $\angle \Phi ^{-1}_{uu}(\omega)(a,b)$ for all combinations of $a\in \{p,q\}$ and $b\in\{r,s\}$.
  
  \emph{$(q,r)$}: In this case $q,r$ are 1-hop neighbors of $l$ in $G^T$ and hence as discussed in Theorem ~\ref{thm:phase}, $\Psi ^{-1}_0(q,l)$ and $\Psi ^{-1}_0(l,r)$ will be non-constant transfer functions. Thus, $\Phi _{uu}^{-1}(q,r)(\omega)$
will be non-constant transfer functions. 

  \emph{$(p,r)$}: Here, $s$ is a 1-hop neighbor of $l$ in $G^T$.  Thus as discussed in Theorem ~\ref{thm:phase},  $\Psi ^{-1}_0(l,s)$ will be non-constant transfer function. Thus, $\Phi _{uu}^{-1}(p,r)(\omega)$
will be a non-constant transfer function. The case $(q,s)$ can be shown similarly where $r$ is a 1-hop neighbor of $l$ in $G^T$..

\emph{$(p,s)$}: As $p,s$ are 2-hop neighbors of $l$ in $G^T$. Then, by ~\eqref{eq:phiXinv} we have that $\Phi _{xx}^{-1}(p,l)$ and $\Phi _{xx}^{-1}(l,s)$ being real valued. Thus, $\Psi ^{-1}_0(p,l)$ and $\Psi ^{-1}_0(l,s)$ will be real valued transfer functions. As $p,s$ are 4 hop neighbors, using ~\eqref{eq:phiXinv}  $\Phi _{xx}^{-1}(p,s)=0$. This implies $\Psi ^{-1}_0(p,s)=0$.  Then, $ \Phi ^{-1}_{uu}(\omega)(p,s)=-\Psi ^{-1}_0(p,l)\Psi ^{-1}_0(l,s)\Delta ^{-1}_l$. We have $\Psi ^{-1}_0(p,l)$,  $\Psi ^{-1}_0(l,s)$ and $\Delta ^{-1}_l$ being real valued. Thus, $\Phi _{uu}^{-1}(p,s)(\omega)$
will be a real valued transfer functions. Therefore, $\angle \Phi _{uu}^{-1}(p,s)(\omega)$ will be a constant for all $\omega \in [-\pi, \pi].$ 
\end{proof}
It follows from above lemma that only if the corresponding phase properties hold in $p-q-l-r-s$, then it is the only correct alignment as any other alignment will have non-2 hop neighbors as 4 hops away and hence will violate the constant phase argument. This verifies Lemma ~\ref{lemma:placecorrupt}.
\end{proof}
\end{document}